\documentclass[11pt, letter]{article}
\usepackage{geometry}
\usepackage{paralist}
\usepackage{fullpage}
\usepackage{amsmath}
\usepackage{amssymb}
\usepackage{hyperref}
\usepackage{amsthm}
\usepackage{thm-restate}
\usepackage[nameinlink, capitalize]{cleveref}
\usepackage[margin=1.5cm]{caption}
\usepackage{subcaption}

\newcommand{\poly}{\operatorname{poly}}

\usepackage{tikz}
\usetikzlibrary{arrows.meta,fit,positioning}

\newtheorem{theorem}{Theorem}[section]
\newtheorem{lemma}[theorem]{Lemma}

\theoremstyle{definition}
\newtheorem{definition}[theorem]{Definition}

\title{Exponential Lower Bounds for Integer-Weighted Shortest-Paths Preservers of DAGs}
 \author{Michael Yi Wang\thanks{\texttt{ywangm@umich.edu}}\\University of Michigan \and Nicole Wein\thanks{\texttt{nswein@umich.edu. Supported by NSF CAREER award 2541910. }}\\University of Michigan }
\date{}

\begin{document}

\maketitle

\begin{abstract}

We study a \emph{graph simplification} problem introduced by Bernstein, Bodwin, and Wein [ITCS'24]. 
We start with a graph with arbitrarily large positive edge weights and the goal is to \emph{reweight} the edges to small \emph{aspect ratio} (ratio between largest and smallest weight) while preserving the \emph{shortest paths structure} (the sequence of vertices and edges along shortest paths). 

They studied whether \emph{polynomial} aspect ratio is always possible. They proved that for general graphs, both directed and undirected, it is not: there exist graphs for which any shortest-paths preserving reweighting requires \emph{exponential} aspect ratio. In contrast, they showed that every DAG (directed acyclic graph) admits a reweighting with \emph{linear} aspect ratio. However, the resulting edge weights are not integers. This  motivated them to pose the open question of whether all DAGs admit a reweighting with polynomially-bounded \emph{integer} edge weights. 

Our main result is to answer this question in the negative: we prove that there exist DAGs for which any shortest-paths preserving integer reweighting requires weights of size $2^{\Omega(n)}$. In fact, this is even true when the DAG has very simple structure: 3 layers of vertices with only 3 vertices in the middle layer. In contrast, we show that if the number of vertices in the middle layer is decreased to 2, then a linear upper bound is possible. 

We extend our exponential lower bound to the \emph{approximate} version of the problem where only a single $\alpha$-approximate shortest path in the original graph must be preserved as an exact shortest path in the reweighted graph. Our exponential lower bound holds even for \emph{any} finite approximation ratio $\alpha>1$.

\end{abstract}

\pagenumbering{gobble}
\clearpage

\pagebreak
\pagenumbering{arabic}

\section{Introduction}

In modern graph algorithms, a popular strategy is to simplify graphs during preprocessing before
proceeding to the main part of the algorithm. We study a graph simplification problem introduced by Bernstein, Bodwin, and Wein \cite{bernstein2025graphsshortestpathstructure}, where we start with a graph with arbitrarily large positive edge weights and the goal is to \emph{reweight} its edges to minimize the \emph{aspect ratio} of the edge weights while preserving
the \emph{shortest path structure}. 

The \emph{aspect ratio} of a (positively weighted) graph is the the multiplicative spread among the edge weights:

\begin{definition}[Aspect ratio] 
Let $G = (V, E, w)$ be a graph with positive edge weights. The \emph{aspect ratio} of $G$ is the quantity $\frac{\max_{e \in E} w(e)}{\min_{e \in E} w(e)}$.
\end{definition}

The notion of preserving the shortest paths structure is as follows: 

\begin{definition}[Shortest-paths preserver]
Given a graph $G = (V, E, w)$, a positively reweighted graph $H = (V, E, w_H)$ on the same vertex and edge set is a \textit{shortest-paths preserver} if, for every shortest path $\pi$ in $H$, the sequence of nodes and edges along $\pi$ is also a shortest path in $G$.
\end{definition}

Note that according to the definition, in the case of ties, only one shortest path between each pair of endpoints must be preserved.  

See \Cref{fig:shppres}, copied from~\cite{bernstein2025graphsshortestpathstructure},  for an example of a shortest-paths preserver. Importantly, after reweighting, the shortest paths can have vastly different lengths than they did originally, but the sequence of vertices and edges along these paths must be preserved.

\begin{figure}[h]
\centering
\begin{tikzpicture}[scale=.7]
\draw [fill=black] (0, 0) circle [radius=0.15];
\draw [fill=black] (2, -2) circle [radius=0.15];
\draw [fill=black] (2, 2) circle [radius=0.15];
\draw [fill=black] (4, 0) circle [radius=0.15];
\draw [ultra thick] (0, 0) -- (2, -2) -- (4, 0) -- (2, 2) -- (0, 0) -- (4, 0);

\node at (2, 0.3) {$500$};
\node at (0.7, 1.3) {$97$};
\node at (0.7, -1.3) {$53$};
\node at (3.3, 1.3) {$5$};
\node at (3.3, -1.3) {$83$};

\node at (2, -3) {\Huge $G$};

\draw [ultra thick, ->] (5, 0) -- (7, 0);

\begin{scope}[shift={(8, 0)}]
\draw [fill=black] (0, 0) circle [radius=0.15];
\draw [fill=black] (2, -2) circle [radius=0.15];
\draw [fill=black] (2, 2) circle [radius=0.15];
\draw [fill=black] (4, 0) circle [radius=0.15];
\draw [ultra thick] (0, 0) -- (2, -2) -- (4, 0) -- (2, 2) -- (0, 0) -- (4, 0);

\node at (2, -3) {\Huge $H$};

\node at (2, 0.3) {$4$};
\node at (0.7, 1.3) {$1$};
\node at (0.7, -1.3) {$2$};
\node at (3.3, 1.3) {$1$};
\node at (3.3, -1.3) {$1$};
\end{scope}
\end{tikzpicture}
\caption{\cite{bernstein2025graphsshortestpathstructure} $H$ is a shortest-paths preserver of $G$ because its shortest paths have the same vertices and edges as those in $G$. Additionally, $H$ has aspect ratio $4$ while $G$ which has aspect ratio $100$.}\label{fig:shppres} 
\end{figure}

Like \cite{bernstein2025graphsshortestpathstructure}, we are interested in low aspect ratio shortest-paths preservers from an existential perspective: can we prove that every graph from a given class has a shortest-paths preserver with certain aspect ratio? In particular, the main question is whether or not \emph{polynomial} aspect ratio is possible. 

\subsection{Motivation}

Many algorithms for problems involving shortest paths incur a dependence on the aspect ratio, or only work when aspect ratio is polynomial. Common reasons for this include the use of bucketing techniques to group shortest paths by length, or matrix-multiplication-based techniques that rely on bounded matrix entries.
Some specific examples noted by~\cite{bernstein2025graphsshortestpathstructure} of shortest-path-related problems that have incurred a dependence on aspect ratio for these reasons
include $(1+\varepsilon)$-hopsets \cite{BW23, KP22a}, roundtrip spanners \cite{RTZ08, ZL18, CDG20}, the All-Pairs Shortest Paths (APSP) problem \cite{Zwick02, SZ99}, dynamic shortest paths (e.g. \cite{HenzingerKN14,BernsteinGW20,ChuzhoyZ23}), and distributed shorted paths (e.g. \cite{ForsterN18,CaoFR21}).

This motivated Bernstein, Bodwin, and Wein~\cite{bernstein2025graphsshortestpathstructure} to define low aspect ratio shortest-paths preservers. If such objects existed, then they could yield a general-purpose reduction for removing the dependence on aspect ratio from many known results. On the other hand, if such objects did not exist, this would limit the shortest paths structures realizable by low aspect ratio graphs, which would be an interesting graph-theoretic phenomenon, and also potentially help with the design of algorithms specifically for low aspect ratio graphs.

More broadly, this research agenda falls into the realm of \emph{graph simplification}: representing a graph by another simpler graph that preserves an important property of the original graph. Some other examples of such objects include 
spanners \cite{Pu89jacm,PU89sicomp}, distance preservers \cite{CE06,AB18}, flow/cut/spectral sparsifiers \cite{ST11,BK96}, mimicking networks \cite{CDKLLPSV21, HKNR98}, terminal minor sparsifiers \cite{KNZ14}, hopsets and shortcut-sets \cite{Thorup92,UY91}, expander decompositions \cite{KVV04,ST11}, tree covers/embeddings \cite{karp19892k,MR1313480}, and DAG covers \cite{MR4928607}.

See the introduction of~\cite{bernstein2025graphsshortestpathstructure} for further discussion of shortest-paths preservers and the surrounding literature. 

\subsection{Known Results}
Bernstein, Bodwin, and Wein~\cite{bernstein2025graphsshortestpathstructure} proved that for general graphs, both directed and undirected, it is not possible to achieve $\mathrm{poly}(n)$ aspect ratio. In particular, there exist $n$-node directed and undirected graphs such that any shortest-paths preserver has  aspect ratio $2^{\Omega(n)}$.

However, for DAGs (directed acyclic graphs), the story is starkly different: \cite{bernstein2025graphsshortestpathstructure} showed a linear upper bound; that is, every $n$-node DAG has a shortest-paths preserver with aspect ratio $O(n)$. Despite this strong upper bound, which also comes with a linear time algorithm, the authors did not identify any algorithmic applications. 

The main barrier against getting algorithmic applications from their method is that even though the aspect ratio is linear, the edge weights are huge. In particular, if $W$ is the original maximum edge weight, each of the original edge weights is \emph{increased} by a quantity between $W$ and $Wn$. Of course, one can always scale down edge weights to any desired range without disrupting the shortest path structure, but scaling them down to polynomial renders them non-integers. This is important because for many shortest-paths related problems, integer edge weights are easier to handle than general real weights because integers enable certain techniques such as bucketing (e.g.~\cite{MR1815738} and many more). For this reason, obtaining low aspect ratio shortest-paths preservers with polynomially-bounded \emph{integer} weights for DAGs would be a more useful tool than the known non-integer counterpart.

In fact, although \cite{bernstein2025graphsshortestpathstructure} did not identify an application for their non-integer linear aspect ratio result for DAGs, they did identify a concrete example of an application if an integer result were to be proven, assuming it could be computed efficiently:
extending the parallel/distributed results in Corollary 1.8 and 1.9 of \cite{RozhonHMGZ23} to work for arbitrary weighted graphs.

Motivated by such potential applications and the shortcomings of the non-integer result, \cite{bernstein2025graphsshortestpathstructure} explicitly posed the following open problem:

\begin{center}
\begin{minipage}{0.9\textwidth}
\begin{center}
    \emph{Does every $n$-node DAG have a shortest-paths preserver $H$ with \textbf{integer} edge weights in the range $[1, \dots, \poly(n)]$? If not, does this hold if we allow stretch $\alpha$?}
    \end{center}
    \end{minipage}
    \end{center}

Their question about stretch refers to the \emph{approximate} version of shortest-paths preservers, where the requirement is that at least one of the $\alpha$-approximate shortest paths between each pair of endpoints in the original graph $G$ is preserved as an exact shortest path in $H$:

\begin{definition}[$\alpha$-stretch shortest-paths preserver] 
Given a graph $G = (V,E,w)$, we say $H = (V,E,w_H)$ is an \emph{$\alpha$-stretch shortest-paths preserver} of $G$ if every shortest path in $H$ is also an $\alpha$-approximate shortest path in $G$. That is, if $\pi$ is a shortest path from $s$ to $t$ in $H$, then $w_G(\pi) \le \alpha \cdot \mathrm{dist}_G(s,t)$
\end{definition}

In \cite{bernstein2025graphsshortestpathstructure} they showed that for general directed graphs, there is an exponential lower bound of aspect ratio $2^{\Omega(n)}$ for the $\alpha$-stretch version even for any finite $\alpha$. For undirected graphs, they provided such an exponential lower bound only for $\alpha\leq 13/12$. They did not explicitly prove any results about the $\alpha$-stretch version for DAGs because they already achieved a linear upper bound for the exact version, and the $\alpha$-stretch version is strictly easier. For DAGs with integer edge weights, they also did not prove any results for the $\alpha$-stretch version, as that is their above open question. In summary, no nontrivial bounds were known for integer-weighted shortest-paths preservers for DAGs, either in the $\alpha$-stretch or exact version.

\subsection{Our Results}

Our main result is to answer the above two open questions in the negative. We prove that there exist DAGs such that any shortest-paths preserver with integer edge weights must have exponentially large maximum edge weight. We prove that this is still the case even for $\alpha$-stretch shortest-paths preservers for \emph{any} finite $\alpha>1$. The new landscape of results is striking: restricting from real edge weights to integer edge weights yields a sudden transition from linear to exponential.

Our theorem statements are actually stronger than the above informal statement in two ways. Firstly, instead of requiring $H$ to have integer edge weights, we allow $H$ to have real edge weights and only require that every shortest path and not-shortest path with the same pair of endpoints differ by at least 1. Formally, such shortest-paths preservers are defined below.

\begin{definition}[Gap-1 shortest-paths preservers]
Given a graph $G = (V,E,w)$, a positively reweighted graph $H = (V,E,w_H)$ on the same vertex and edge set is a \emph{gap-1 shortest-paths preserver} if $H$ satisfies the following:
\begin{itemize}
    \item 
    $H$ is a shortest-paths preserver.

    \item 
    For any shortest path $\pi$ in $H$, and any not-shortest path $\pi'$ in $H$ with the same endpoints as $\pi$, we have $w_H(\pi) + 1 \le w_H(\pi')$.
    
\end{itemize}
\end{definition}

Note that if a shortest-paths preserver $H$ has integer weights, then it is automatically a gap-1 shortest-paths preserver, so a lower bound for gap-1 shortest paths preservers would extend to a lower bound for integer-weighted shortest paths preservers. 

Secondly, and more importantly, we prove our exponential lower bounds (for both the exact and $\alpha$-stretch versions) even for an extremely simple class of DAGs: 3-layered DAGs with only 3 vertices in the middle layer. That is, the vertex set is of the form $V = V_1 \cup V_2 \cup V_3$, edges only go from $V_1$ to $V_2$ and from $V_2$ to $V_3$, and $|V_2| = 3$.

Our exponential lower bound for the exact version is stated as follows:

\begin{restatable}[Exponential lower bound for 3-layered DAGs ]{theorem}{maintheorem}\label{thm:main}
    There are 3-layered DAGs $G = (V = V_1 \cup V_2 \cup V_3, E, w)$ such that $|V_2| = 3$ and any gap-1 shortest-paths preserver $H = (V, E, w_H)$ 
    satisfies 
    \[
    \max_{e \in E} w_H(e) \ge 2^{\Omega(|V|)}.
    \]
\end{restatable}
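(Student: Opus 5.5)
We encode the problem in coordinates. Let the middle layer be $V_2=\{a,b,c\}$. Every $s\in V_1$ and every $t\in V_3$ will be adjacent to all of $a,b,c$. For a pair $(s,t)$, the $H$-shortest $s\to t$ path goes through the middle vertex minimizing $w_H(s,m)+w_H(m,t)$, so only differences matter. For each source $s$ set $P_s=w_H(s,b)-w_H(s,a)$ and $Q_s=w_H(s,c)-w_H(s,a)$; for each sink $t$ define $P'_t,Q'_t$ analogously from the edges $(b,t),(c,t)$ versus $(a,t)$.

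We will design $G$ so that every pair $(s,t)$ has a unique shortest path with a prescribed middle vertex $m(s,t)$. Then a gap-1 preserver must satisfy a system of two-variable inequalities, one pair per $(s,t)$:
\begin{itemize}
\item $m=a$ forces $P_s+P'_t\ge 1$ and $Q_s+Q'_t\ge 1$;
\item $m=b$ forces $P_s+P'_t\le -1$ and $(Q_s-P_s)+(Q'_t-P'_t)\ge 1$;
\item $m=c$ forces $Q_s+Q'_t\le -1$ and $(Q_s-P_s)+(Q'_t-P'_t)\le -1$.
\end{itemize}
Since $|P_s|,|Q_s|\le \max_e w_H(e)$, it suffices to force some $|P_s|$ or $|Q_s|$ to be $2^{\Omega(n)}$.

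The basic tool is difference propagation through a shared sink. If a sink $t$ satisfies $m(s,t)=a$ and $m(s',t)=b$, then $P_s+P'_t\ge1$ and $P_{s'}+P'_t\le -1$, hence $P_s-P_{s'}\ge 2$. Analogous pairs of routings give lower bounds on the differences of $Q$ or of $Q-P$ between two sources. Symmetrically, a shared source gives bounds on differences between sinks. Using only two middle vertices, all constraints involve the single functional $P$, which is what allows the linear bound for $|V_2|=2$. With three middle vertices there are three functionals $P$, $Q$, $Q-P$ tied by the identity $Q=P+(Q-P)$, and this identity is where growth comes from.

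The plan is a Fibonacci/doubling chain of sources $s_0,s_1,\dots,s_k$ with $O(1)$ auxiliary sinks per step, so $n=O(k)$. We maintain gaps $\Delta^P_i=P_{s_{i+1}}-P_{s_i}$ and $\Delta^{Q-P}_i$, whose sum is the $Q$-gap $\Delta^Q_i$. Each step adds sinks whose routings, against $s_i,s_{i+1},s_{i+2}$, force two kinds of inequality:
\begin{itemize}
\item a $P$-difference at level $i+1$ is at least the $Q$-difference at level $i$, i.e.\ $\Delta^P_{i+1}\gtrsim \Delta^Q_i=\Delta^P_i+\Delta^{Q-P}_i$;
\item similarly $\Delta^{Q-P}_{i+1}\gtrsim\Delta^P_i$.
\end{itemize}
Each such inequality is obtained by sandwiching a sink value between two sources, e.g.\ $P'_t\le -1-P_{s_{i+2}}$ and $P'_t\ge 1-P_{s_{i+1}}$ combined with a $Q$-constraint on the same sink. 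This yields Fibonacci-type growth $\Delta_i\ge \varphi^{i}$, so $\max_e w_H(e)\ge 2^{\Omega(k)}=2^{\Omega(|V|)}$.

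Finally, we must check that the prescribed routing pattern $m(\cdot,\cdot)$ is realized by some positive weighting $G$ with unique shortest paths. We do this by exhibiting explicit real values $P,Q,P',Q'$ (themselves exponentially large, which is allowed for $G$) satisfying all strict inequalities. We then add a large constant to every edge to make the weights positive; this does not change the choice of middle vertex, since every $s\to t$ path uses exactly two edges.

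The main obstacle is designing the step gadget. The routing pattern must be simultaneously realizable, i.e.\ the sinks must be placed consistently in the tropical-line partition of the plane induced by the points $(P_s,Q_s)$. At the same time, the sandwiching must genuinely add the two gaps rather than just shift them. I would first try a geometric picture: place the source points on a convex curve $(x,2x)$-like spiral with exponentially growing coordinates. Then choose sinks at the tropical "midpoints" between consecutive sources, so that each sink's region is cut by exactly the lines needed, and read off the forced inequalities.
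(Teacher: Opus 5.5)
Your reformulation is correct and matches the paper's setup. With $a,b,c=z_1,z_2,z_3$, your $P$, $Q-P$, $Q$ are $-h_1'$, $-h_2'$, $-h_3'$. A sink with $m(s,t)=a$ and $m(s',t)=b$ is exactly how the paper's sink $v_e$ encodes an edge $e=(s,s')\in E_1$ as the constraint $h_1'(s)+2\le h_1'(s')$.

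\textbf{The gap: the proposal stops where the theorem's content begins.} No step gadget is given, and the recurrence $\Delta^P_{i+1}\gtrsim\Delta^Q_i$, $\Delta^{Q-P}_{i+1}\gtrsim\Delta^P_i$ is asserted rather than derived. The one mechanism you spell out, $1-P_{s_{i+1}}\le P'_t\le -1-P_{s_{i+2}}$, is a single constant-size order constraint $P_{s_{i+1}}-P_{s_{i+2}}\ge 2$. With your sign convention, that bounds $\Delta^P_{i+1}$ from above, not from below. The ``$Q$-constraint on the same sink'' that is supposed to turn this into addition of gaps is never specified, and you do not show that any choice of routings chains into the claimed recurrence. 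Realizability is also deferred (``exhibit explicit values''), and you never read off the routings your spiral configuration would induce. Since the theorem is an existence statement, this missing construction is essentially the whole proof.

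\textbf{The missing idea.} Transferring a gap needs \emph{interval containment}, not a single sandwich, followed by a conversion and a concatenation. In the paper's language (three DAGs on one vertex set, $h_1'+h_2'=h_3'$), start from $D=h_1'(c_{r-1})-h_1'(b_{r-1})$:
\begin{itemize}
\item \emph{Containment.} The $G_1$-edges $(a_r,b_{r-1})$, $(c_{r-1},b_r)$, $(c_r,b_{r-1})$, $(c_{r-1},d_r)$ force both $[h_1'(a_r),h_1'(b_r)]$ and $[h_1'(c_r),h_1'(d_r)]$ to contain $[h_1'(b_{r-1}),h_1'(c_{r-1})]$, so each has length $>D$.
\item \emph{Conversion.} The $G_3$-edges $(b_r,a_r)$ and $(d_r,c_r)$ are oriented against the $h_1'$-order on these pairs. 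Via $h_2'=h_3'-h_1'$, each $h_1'$-gap becomes an $h_2'$-gap $>D$.
\item \emph{Concatenation.} The single $G_2$-edge $(a_r,d_r)$ chains the two $h_2'$-gaps, giving $h_2'(c_r)-h_2'(b_r)>2D$. The roles of $h_1$ and $h_2$ then swap at the next level.
\end{itemize}

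\textbf{How the paper handles realizability.} It separates realizability from the design. It specifies only the desired constraints as an additive triple and certifies feasibility with explicit $h_1,h_2$ of order $5^r$. It then realizes each constraint by its own sink adjacent to just two middle vertices, with weights built from the $h_i$. Routings induced for other source--sink pairs only add constraints, so they cannot weaken the lower bound. By insisting that every sink see all of $a,b,c$ and designing directly in the $(P,Q)$-plane, you must control every source--sink interaction at once. That is exactly the obstacle you identified and left open.
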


Our exponential lower bound for the $\alpha$-stretch version is stated as follows:

\begin{restatable}[Exponential lower bound for $\alpha$-stretch version for 3-layered DAGs]{theorem}{approx}\label{thm:approx}

For any $\alpha > 1$, there are 3-layered DAGs $G = (V = V_1 \cup V_2 \cup V_3, E, w)$ such that $|V_2| = 3$ and any $\alpha$-stretch gap-1 shortest-paths preserver $H = (V, E, w_H)$ satisfies 
    \[
    \max_{e \in E} w_H(e) \ge 2^{\Omega(|V|)}.
    \]
\end{restatable}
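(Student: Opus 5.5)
The plan is to prove \Cref{thm:approx} by reduction to \Cref{thm:main}. I would construct, for each $\alpha>1$, a 3-layered DAG $G$ whose shortest-path pattern is the one used in the exact lower bound, but in which the intended shortest path between each pair is shorter than every other path by a factor larger than $\alpha$. Then every $\alpha$-approximate shortest path in $G$ is an exact shortest path, so any $\alpha$-stretch gap-1 preserver is an exact gap-1 preserver, and the bound of \Cref{thm:main} applies.

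\textbf{Step 1: the exact lower bound (\Cref{thm:main}).} Let the middle layer be $V_2=\{a,b,c\}$. In $H$, every $s$--$t$ path with $s\in V_1$, $t\in V_3$ has the form $s\to m\to t$, so only the middle vertex matters. For each $s$, record the two differences
\[
x_{s,b}-x_{s,a}, \qquad x_{s,c}-x_{s,a},
\]
which place $s$ at a point $p_s\in\mathbb{R}^2$, where $x_{s,m}$ is the $H$-weight of edge $(s,m)$. Do the same for each $t$ with the weights of edges $(m,t)$, giving a point $q_t$. The statement ``$m$ is the shortest middle vertex for $(s,t)$, with gap at least $1$'' becomes a set of linear inequalities in $p_s+q_t$ with slack $1$. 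Equivalently, $p_s+q_t$ must lie in a prescribed region of a fixed three-sector fan, at distance at least about $1$ from its boundary rays.

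With only two middle vertices the points live on a line, and the constraints only force a linear order; this is why the 2-vertex case gives a linear bound. With three middle vertices the constraints are genuinely two-dimensional. I would choose the pattern (which middle vertex wins for each pair) along a chain $s_1,t_1,s_2,t_2,\dots$ so that consecutive constraints force a spiral. Each new vertex's point must lie on the far side of a ray determined by the previous two points, at distance at least the previous distance plus the previous displacement. This yields a Fibonacci- or doubling-type recursion $D_{i+1}\ge D_i+D_{i-1}$ on the coordinate magnitudes, hence $D_k\ge 2^{\Omega(k)}$. Coordinates are differences of edge weights, so some edge weight is $2^{\Omega(|V|)}$.

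I also need the pattern to be realizable by some original weights $w$ on $G$. I would do this explicitly by placing the points of $G$ on a geometric spiral with real, unbounded coordinates, which is exactly the freedom the non-integer upper bound says exists.

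\textbf{Step 2: amplifying the gaps for $\alpha$-stretch.} Two operations are available. Adding $f(s)$ to all edges out of $s$, and $g(t)$ to all edges into $t$, changes no comparison among $s$--$t$ paths. Choosing $f$ and $g$ negative therefore shrinks the shortest-path lengths while keeping the absolute gaps fixed, which increases the ratios. Alternatively, since only the constraints along the chain are used in Step 1, I can delete all edges not needed for them. That leaves few competing paths per pair, and then I can set the intended edges to weight $\varepsilon$ and choose the others so that each competing path has length more than $\alpha$ times the intended one.

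\textbf{Main obstacle.} I expect the hard part to be designing the spiral pattern in Step 1 so that it simultaneously (i) forces the exponential recursion, (ii) is realizable in $G$, and (iii) allows the Step 2 amplification. Condition (iii) requires that, in every pair we keep, each competing path is multiplicatively longer by a factor of more than $\alpha$, while all edge weights stay positive. The first thing I would try is to realize $G$ with weights $M^{i}$ for $M\gg\alpha$ along the chain, which makes all ratios at least $M$. I would then check that the gap-1 inequalities used in the recursion still involve only pairs whose shortest path is unique and separated by that factor.
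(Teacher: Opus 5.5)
\textbf{There is a genuine gap.} Your high-level plan matches the paper's: keep a forcing structure, and choose the original weights so that each forced shortest path is the \emph{only} $\alpha$-approximate path between its endpoints. The exact argument then applies. But the proposal never produces such weights, and that is the whole content of the theorem. The mechanisms you offer in Step 2 do not produce them.
\begin{itemize}
\item \Cref{thm:main} cannot be used as a black box. In its instance every forced comparison has additive gap $1$ against exponentially long paths.
\item Vertex shifts $f(s),g(t)$ preserve those gaps, and positivity only lets you lower a vertex's edges until its lightest edge is near $0$. For odd $r$, the out-weights of $a_r$ are $0,0,-2\cdot 5^r$ up to a common constant. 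Also, $a_r$ is forced through $z_1,z_2,z_3$ by its $E_1,E_2,E_3$ edges. So after any admissible shift, the two $a_r$--$v_e$ paths for $e=(a_r,b_{r-1})\in E_1$ still have lengths about $2\cdot 5^r$ and $2\cdot 5^r+1$.
\item Putting weight $\varepsilon$ on ``intended edges'' is inconsistent, because one edge can be intended for one pair and competing for another. For odd $r$, $(b_r,z_1)$ lies on the intended path for $(b_r,a_r)\in E_3$ and on the competing path for $(c_{r-1},b_r)\in E_1$.
\item Geometric weights $M^{p}$ are the right direction, but ``all ratios at least $M$'' is false as stated. If the largest exponent on the competing path does not strictly exceed every exponent on the intended path, the ratio is at most $2$. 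Strict dominance must hold for all forced pairs at once.
\end{itemize}

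\textbf{The missing idea} is that simultaneous max-plus condition. The paper reuses $G^*$ built from the triple of \cref{sec:3DAG} and sets $w_\alpha(e)=(3\alpha)^{p(e)}$. It picks explicit integer exponents on the $V_1^*$--$V_2^*$ edges that satisfy \cref{lem:max}. For example, for $(u,v)\in E_1$ it requires $\max\{p(u,z_2),p(v,z_1)\}\ge\max\{p(u,z_1),p(v,z_2)\}+2$. This is a tropical analogue of the topological-order constraints, and it is checked by hand. Let $M$ be the left-hand maximum. The slack $2$ lets the single exponent $M-1$ on one in-edge of $v_e$, with $0$ on the other, separate \emph{both} pairs at $v_e$. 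Base $3\alpha>2\alpha$ then turns a one-step exponent gap into a ratio above $\alpha$ (\cref{lem:unique}). Only the pairs of \cref{lem:sp_structure} need to be isolated, not all pairs, since \cref{lem:h_i'} uses only those.

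\textbf{Step 1 is also unsupported}, since no pattern is specified. Suppose each vertex of your alternating chain is compared only with its neighbours. Then those comparisons can be satisfied with $q_{t_i}=c_i-p_{s_i}$ and $p_{s_{i+1}}=c_i'-q_{t_i}$, where $c_i,c_i'$ are fixed margin-$1$ points of the required sectors. This gives $p_{s_{i+1}}=p_{s_i}+O(1)$, so no exponential recursion can follow from those comparisons alone. The forcing needs cycles in the comparison graph, as in the paper's level gadget, which you could simply reuse.
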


Additionally, we prove that the class of graphs in our lower bounds (3-layered DAGs with 3 vertices in the middle layer) is precisely on the boundary between exponential and polynomial: If we remove even one vertex from the middle layer, we show that a \emph{linear} upper bound is possible. 

\begin{restatable}[Linear upper bound for 3-layered DAGs with integer weights when $|V_2| = 2$]{theorem}{upperbound}\label{thm:upper_bound}
Every 3-layered DAG $G = (V = V_1 \cup V_2 \cup V_3, E,w)$ with $|V_2| = 2$ has a shortest-paths preserver $H = (V,E,w_H)$ with integer edge weights in the range $[1,\dots, |V_1| + |V_3| + 1]$. 
    
\end{restatable}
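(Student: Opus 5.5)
The plan is to reduce the problem to preserving the relative order of a single list of real numbers. Write $V_2 = \{a, b\}$. Every path in the 3-layered DAG that joins two vertices at different layers either is a single edge, which is then the unique path between its endpoints, or has the form $u \to c \to v$ with $u \in V_1$, $c \in \{a,b\}$ and $v \in V_3$. So the only pairs with more than one path are pairs $(u,v) \in V_1 \times V_3$ for which all four edges $ua, av, ub, bv$ exist. For such a pair, the $G$-shortest path is determined by the sign of
\[
(w(ua) + w(av)) - (w(ub) + w(bv)) = x_u + y_v,
\]
where $x_u := w(ua) - w(ub)$ and $y_v := w(av) - w(bv)$.

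We need $H$-weights whose analogous quantities $X_u + Y_v$ satisfy $\operatorname{sign}(X_u + Y_v) = \operatorname{sign}(x_u + y_v)$ for every such pair. If $x_u + y_v \neq 0$, this makes the $H$-shortest path unique and equal to the $G$-shortest one. If $x_u + y_v = 0$, it makes both paths shortest in both graphs, so every $H$-shortest path is $G$-shortest. Vertices $u$ adjacent to only one of $a,b$, and vertices $v$ likewise, impose no constraint; their edges simply receive weight $1$.

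The key step is to observe that $x_u + y_v > 0$ iff $x_u > -y_v$, with the analogous statements for $<$ and $=$. So only the relative order between the numbers $x_u$ and the numbers $-y_v$ matters. We sort the combined multiset $\{x_u : u \in V_1\} \cup \{-y_v : v \in V_3\}$, restricted to the relevant vertices. We then assign each value its dense rank $\rho \in \{1, \dots, N\}$, giving equal values equal rank, where $N \le |V_1| + |V_3|$. We set $X_u := \rho(x_u)$ and $Y_v := -\rho(-y_v)$. Since ranking is order-preserving, including ties, $X_u + Y_v = \rho(x_u) - \rho(-y_v)$ has exactly the sign of $x_u - (-y_v) = x_u + y_v$.

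Finally we realize these differences with positive integer weights:
\[
w_H(ub) = 1,\quad w_H(ua) = 1 + \rho(x_u),\quad w_H(av) = 1,\quad w_H(bv) = 1 + \rho(-y_v).
\]
Then $w_H(ua) - w_H(ub) = X_u$ and $w_H(av) - w_H(bv) = Y_v$, and all weights lie in $[1, N+1] \subseteq [1, |V_1| + |V_3| + 1]$.

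The only place needing care is the handling of ties. The definition of shortest-paths preserver quantifies over every shortest path of $H$, so an $H$-tie is permitted only when $G$ also has a tie. Using dense ranks, rather than, say, a tie-breaking total order, is exactly what guarantees both directions: $H$ ties precisely when $G$ ties, and $H$ is strict precisely when $G$ is strict. Beyond that, the argument is routine verification.
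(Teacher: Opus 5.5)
Your proof is correct. It ends with exactly the paper's weighting: with $a=z_1$, $b=z_2$ and $h=\rho$, you set $w_H(u,z_1)=1+h(u)$, $w_H(z_2,v)=1+h(v)$, and all other weights to $1$. The difference is in how you obtain the integer labels $h$.

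The paper proceeds in four steps:
\begin{itemize}
\item it first perturbs $G$ so that shortest paths are unique;
\item it builds an auxiliary graph on $V_1\cup V_3$, with edge $(u,v)$ if $u\to z_1\to v$ is shortest and $(v,u)$ if $u\to z_2\to v$ is shortest;
\item it proves this graph acyclic by summing the strict inequalities around a putative cycle;
\item it then uses an arbitrary injective topological order $h$ into $\{1,\dots,|V_1|+|V_3|\}$.
\end{itemize}

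You bypass the acyclicity lemma. Your quantities $x_u=w(ua)-w(ub)$ and $-y_v=w(bv)-w(av)$ already form a real-valued topological order of that auxiliary graph; they are exactly the potentials $h_1^*$ the paper uses in its appendix reduction. So the paper's cycle-summing argument is just the telescoping of your potential, and dense ranking compresses it to integers.

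What each route buys:
\begin{itemize}
\item Your version is shorter and needs no perturbation. It also treats vertices adjacent to only one middle vertex explicitly, which the paper leaves implicit.
\item The paper's version shows that $H$ depends only on the combinatorial pattern of which path wins for each pair. Your ranks also use comparisons among the $x_u$'s themselves, which that pattern does not determine.
\item The paper's version is also phrased in the topological-order language of the rest of the paper. There, a single DAG can always be compressed to linear range, while three orders coupled by $h_1+h_2=h_3$ cannot.
\end{itemize}

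One small correction to your closing remark: dense ranking is sufficient but not necessary. If $x_u+y_v=0$, both paths are shortest in $G$, so an $H$ in which either one wins strictly is still a valid preserver. A tie-breaking total order therefore works just as well; this is effectively what the paper's perturbation followed by an injective topological order does.
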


\subsection{Our Techniques}
\label{subsec:techniques}
On first glance, one might expect that every DAG has a shortest-paths preserver with linearly bounded integer edge weights, via rounding the existing non-integer linear upper bound. It sounds reasonable that there would be some slack available for rounding because we only need to preserve the identity of a shortest path between each pair of endpoints, not the relative lengths of any of the other paths between those endpoints. Furthermore, even without a linear (or polynomial) upper bound for general DAGs, one might at least expect a polynomial upper bound for the simple class of 3-layered DAGs with only 3 vertices in the middle layer. Indeed, we found it surprising that there are in fact exponential lower bounds.

As a starting point for proving an exponential lower bound for DAGs with integer edge weights, it is natural to consider the known exponential lower bounds for general (directed and undirected) graphs. These lower bounds are quite simple and we briefly outline them. For directed graphs the construction (depicted in Figure 3 of~\cite{bernstein2025graphsshortestpathstructure}) has a very simple structure: $n/3$ layers where each layer is a directed 3-cycle, and each pair of adjacent layers is connected by a downwardly directed carefully chosen perfect matching of 3 edges. The undirected lower bound follows the same general structure but with 5-cycles instead of 3-cycles. In both constructions, the weights of the edges increase exponentially up the layers. The analysis is also quite simple: The graph is constructed so that for certain pairs of vertices in adjacent layers $i, i+1$ their shortest path consists of \emph{two} edges in layer $i$ and a cross-layer edge between layers $i$ and $i+1$, while a corresponding not-shortest path contains \emph{one} edge in layer $i+1$ and a cross-layer edge. After reweighting, the sum of edge weights on all of these shortest paths must be smaller sum of the edge weights on these corresponding not-shortest paths. Then, because the not-shortest paths have a half as many edges than the shortest paths (ignoring the cross-layer edges which cancel out), the average edge weight in layer $i+1$ must be twice the average edge weight in layer $i$. 

These arguments rely heavily on the \emph{symmetry} of the construction. Every edge is ``identical'' to the other edges in its layer: each participates in a symmetric structure of shortest and not-shortest paths, and each layer forms a symmetric structure (a cycle). In contrast, DAGs are inherently asymmetric: there can't be cycles and there must be source(s) and sink(s). For this reason, any natural adaptation of the above construction to the case of DAGs seems to suffer from asymmetry that prevents a clean doubling-at-each-layer argument. The asymmetry allows certain edges to absorb more weight than others which ends up allowing all weights to collapse to polynomially-bounded integer values. Furthermore, for certain DAGs that one can construct, this collapse is nontrivial, and seems to require reasoning about intricate chain reactions of edge decreases. 

Because of these issues, our exponential lower bound construction for DAGs with integer edge weights is completely different and significantly more involved than the known constructions for general graphs. 
Instead of trying to directly reason about all possible chain reactions of edge decreases in candidate lower bound constructions, which becomes messy, our approach is to reduce from a new seemingly unrelated problem.
In particular, we reduce from a somewhat strange problem concerning what we call \emph{3-topological order preservers}. From there, ChatGPT 5.5 Pro was able to provide ideas for getting an exponential lower bound for 3-topological order preservers (after not being able to solve our original problem). This, together with our reduction, implies an exponential lower bound for integer-weighted shortest-paths preservers in DAGs.

To define 3-topological order preservers, we first provide the definition of \emph{topological ordering} that we will be working with. 

\begin{definition}[Topological ordering of a DAG]
For a DAG $G = (V,E)$, we say $h: V \to \mathbb R$ is a \emph{topological ordering} of $G$ if for all $(u,v) \in E$, we have $h(u) +1 \le h(v)$.
\end{definition} 

Note that according to this definition, each vertex is assigned a \emph{real} number (not necessarily an integer), and multiple vertices can be assigned the same number.

We also need to define an \emph{additive triple} of DAGs. See \Cref{fig:additive_triples} for examples.

\begin{definition}[Additive triple of DAGs]
    Let $(G_1,G_2,G_3)$ be a triple of DAGs, where each $G_i = (V, E_i)$ is defined on the same vertex set $V$. We say $(G_1, G_2, G_3)$ is an \emph{additive triple} if there exists topological orderings $h_i : V \to \mathbb R$ of $G_i$ such that for all $v \in V$, $h_1(v) + h_2(v) = h_3(v)$.
\end{definition}

\begin{figure}[htbp]
\centering

\begin{subfigure}{0.95\textwidth}
\centering

\begin{tikzpicture}[
    vertex/.style={circle, fill=black, inner sep=2pt},
    >=stealth
]

\node at (-0.5,1) {$G_1:$};
\node[vertex] (u1) at (0,0) {};
\node[vertex] (v1) at (2,0) {};
\node[below=5pt] at (u1) {$u$};
\node[below=5pt] at (v1) {$v$};
\draw[-{Stealth[length=8pt,width=6pt]}] (u1) -- (v1);

\node at (4.5,1) {$G_2:$};
\node[vertex] (u2) at (5,0) {};
\node[vertex] (v2) at (7,0) {};
\node[below=5pt] at (u2) {$u$};
\node[below=5pt] at (v2) {$v$};
\draw[-{Stealth[length=8pt,width=6pt]}] (v2) -- (u2);

\node at (9.5,1) {$G_3:$};
\node[vertex] (u3) at (10,0) {};
\node[vertex] (v3) at (12,0) {};
\node[below=5pt] at (u3) {$u$};
\node[below=5pt] at (v3) {$v$};
\draw[-{Stealth[length=8pt,width=6pt]}] (u3) -- (v3);

\end{tikzpicture}

\caption{The triple $(G_1, G_2, G_3)$ of DAGs above is an additive triple. For example, we can take the topological orderings $h_1(u) = 0, h_1(v) = 5, h_2(u) = 1, h_2(v) = 0, h_3(u) = 1, h_3(v) = 5$.}

\end{subfigure}

\vspace{0.5cm}

\begin{subfigure}{0.95\textwidth}
\centering

\begin{tikzpicture}[
    vertex/.style={circle, fill=black, inner sep=2pt},
    >=stealth
]

\node at (-0.5,1) {$G_1:$};
\node[vertex] (u1) at (0,0) {};
\node[vertex] (v1) at (2,0) {};
\node[below=5pt] at (u1) {$u$};
\node[below=5pt] at (v1) {$v$};
\draw[-{Stealth[length=8pt,width=6pt]}] (u1) -- (v1);

\node at (4.5,1) {$G_2:$};
\node[vertex] (u2) at (5,0) {};
\node[vertex] (v2) at (7,0) {};
\node[below=5pt] at (u2) {$u$};
\node[below=5pt] at (v2) {$v$};
\draw[-{Stealth[length=8pt,width=6pt]}] (u2) -- (v2);

\node at (9.5,1) {$G_3:$};
\node[vertex] (u3) at (10,0) {};
\node[vertex] (v3) at (12,0) {};
\node[below=5pt] at (u3) {$u$};
\node[below=5pt] at (v3) {$v$};
\draw[-{Stealth[length=8pt,width=6pt]}] (v3) -- (u3);

\end{tikzpicture}

\caption{The triple $(G_1, G_2, G_3)$ of DAGs above is not an additive triple.}

\end{subfigure}

\caption{Examples of additive triples and non-additive triples.}

\label{fig:additive_triples}

\end{figure}

We focus on the following question: Given an additive triple of DAGs, do there always exist topological orderings $h_i$ of $G_i$ with $h_1 + h_2 = h_3$, such that the maximum range of all $h_i$ is polynomially bounded? More formally:

\begin{definition}[3-topological order preserver]
    Let $(G_1, G_2, G_3)$ be an additive triple of DAGs. We say $(h_1, h_2, h_3)$ is a \emph{3-topological order preserver} if $h_i$ is a topological ordering of $G_i$ for $i = 1,2,3$, and for all $v \in V$, $h_1(v) + h_2(v) = h_3(v)$.
\end{definition}

\begin{definition}[Range of a 3-topological order preserver]
    Let $(G_1, G_2, G_3)$ be an additive triple of DAGs with $G_i = (V, E_i)$, and let $(h_1, h_2, h_3)$ be a 3-topological order preserver. The \emph{range} of $(h_1, h_2, h_3)$ is defined as the quantity
    \[
    \mathrm{range}(h_1,h_2,h_3) := \max_{i \in \{1,2,3\}} \left(\max_{v \in V} h_i(v) - \min_{v \in V} h_i(v)\right)
    \]
\end{definition}

Formally, the above question becomes whether every additive triple of DAGs on $n$ vertices has a 3-topological order preserver with $\mathrm{poly}(n)$ range.

Now that we have defined 3-topological order preservers, we are ready to state our reduction from 3-topological order preservers to shortest-paths preservers. 

\begin{restatable}[Reduction from 3-topological order preservers to shortest-paths preservers] {theorem}{reduction}
\label{thm:reduction}
    Given an additive triple of DAGs $(G_1,G_2,G_3)$ with $G_i = (V,E_i)$, we can construct a 3-layered DAG $G^* = (V^* = V_1^* \cup V_2^* \cup V_3^*, E^*, w)$ with $|V_2^*| = 3$, such that 
    \begin{itemize}
        \item 
        $|V_1^*| = |V|$

        \item 
        $|V_3^*| = |E_1| + |E_2| + |E_3|$

        \item 
        The minimum possible range of a 3-topological order preserver of $(G_1,G_2,G_3)$ is at most 2 times the minimum possible maximum weight of a gap-1 shortest-paths preserver of $G^*$. That is, 
        \[\min_{(h_1',h_2',h_3')} \left(\mathrm{range}(h_1',h_2',h_3')\right) \le 2 \cdot \min_{G' = (V^*, E^*,w')} \left(\max_{e \in E^*} w'(e)\right)
        \]
        where the first minimum is over 3-topological order preservers $(h_1',h_2',h_3')$ of $(G_1,G_2,G_3)$ and the second minimum is over gap-1 shortest-paths preservers $G' = (V^*, E^*, w')$ of $G^*$.
    \end{itemize}
\end{restatable}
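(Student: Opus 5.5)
The plan is to build $G^*$ so that, in any gap-1 shortest-paths preserver, differences of first-layer edge weights behave exactly like topological orderings. Let $V_1^* = V$ and $V_2^* = \{a_1,a_2,a_3\}$. Let $V_3^*$ contain one vertex $t_e$ for every $e \in E_1 \cup E_2 \cup E_3$, where parallel copies from different $E_i$ are kept distinct. The edges are:
\begin{itemize}
\item every $v \in V$ is joined to all three of $a_1,a_2,a_3$;
\item each $t_e$ has exactly two in-neighbours, determined by which $E_i$ contains $e$: the pair is $\{a_1,a_2\}$ for $e\in E_1$, $\{a_2,a_3\}$ for $e\in E_2$, and $\{a_1,a_3\}$ for $e\in E_3$.
\end{itemize}
The vertex counts in the first two bullets of the theorem are then immediate.

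\textbf{The key computation.} Fix any reweighting $w'$. Write $x_j(v) = w'(v,a_j)$ and $y_j(t) = w'(a_j,t)$, and set $D_{jk}(v) = x_j(v)-x_k(v)$. Take $e=(u,v)\in E_1$ and suppose the unique shortest $u\to t_e$ path in $G^*$ goes through $a_1$, while the unique shortest $v \to t_e$ path goes through $a_2$. Since each $(s,t_e)$ pair has exactly two paths, the preserver property forces the $G^*$-shortest path to be the unique $H$-shortest path. The other path is then not shortest, so the gap-1 condition gives
\[
x_1(u)+y_1+1\le x_2(u)+y_2, \qquad x_2(v)+y_2+1 \le x_1(v)+y_1 .
\]
Adding these gives $D_{12}(u)+2\le D_{12}(v)$. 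The analogous constraints for $E_2$ and $E_3$ give $D_{23}(u)+2\le D_{23}(v)$ and $D_{13}(u)+2\le D_{13}(v)$.

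\textbf{From a preserver to an order preserver.} Set $h_1 = D_{12}/2$, $h_2 = D_{23}/2$ and $h_3 = D_{13}/2$. Then $h_i$ is a topological ordering of $G_i$, and $h_1+h_2=h_3$ holds identically because $D_{12}+D_{23}=D_{13}$. All weights lie in $(0,W]$ with $W=\max w'$, so each $D_{jk}$ lies in $(-W,W)$. Hence each $h_i$ has range less than $W \le 2W$, which gives the third bullet.

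\textbf{Choosing the original weights $w$.} This step must make the $G^*$-shortest paths exactly the intended ones, and uniquely so. It is where I expect the only real care to be needed. Since the triple is additive, fix some 3-topological order preserver $(g_1,g_2,g_3)$, scaled by a factor of $4$ so that every edge constraint holds with slack at least $4$. Put $x_3(v) = C$, $x_2(v) = C + g_2(v)$ and $x_1(v) = C+g_3(v)$, with $C$ large enough that all weights are positive. Then $D_{12}=g_1$, $D_{23}=g_2$ and $D_{13}=g_3$.

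For $e=(u,v)\in E_1$, choose $y_1(t_e)$ and $y_2(t_e)$ positive with $y_2-y_1$ strictly between $g_1(u)$ and $g_1(v)$; for instance take the midpoint, perturbed slightly. Then from $u$ the route via $a_1$ is strictly shorter, and from $v$ the route via $a_2$ is strictly shorter. For every other source $s$, perturb $y_2-y_1$ generically so that $g_1(s) \ne y_2-y_1$, which prevents ties. Do the same for $E_2$ and $E_3$.

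Every other vertex pair in $G^*$ is joined by at most one path, so all shortest paths in $G^*$ are unique. This uniqueness is exactly what the key computation above requires. Taking $w'$ to be a gap-1 preserver of minimum maximum weight then completes the inequality.
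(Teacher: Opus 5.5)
Your proposal is correct and follows essentially the same route as the paper. It uses the same gadget (every first-layer vertex joined to three middle vertices, and one third-layer vertex per edge of each $E_i$ attached to the corresponding pair), first-layer weights whose pairwise differences encode the three orderings, and second-layer weight differences placed strictly between the two endpoint values. The core step is also the same: summing the two gap-1 inequalities at each $t_e$ to get $D_{jk}(u)+2\le D_{jk}(v)$.

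The deviations are cosmetic. Halving the differences gives you the slightly sharper bound $\mathrm{range}<\max_{e} w'(e)$ instead of the paper's $2\max_{e} w'(e)$. The extra perturbation against ties for other sources is harmless but unnecessary, since only the pairs $(u,t_e)$ and $(v,t_e)$ are ever used.
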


We refer the reader to \Cref{sec:reduction} for intuition about how the reduction works. 

Finally, with the help of ChatGPT 5.5 Pro, we construct an exponential lower bound for the range of 3-topological order preservers.

\begin{restatable}[Exponential lower bound for the range of 3-topological order preservers]{theorem}{threedaglb}\label{thm:three_dag_lb}

    There exists additive triple of DAGs $(G_1,G_2,G_3)$ where $G_i = (V, E_i)$ with $n = |V|$ and $m = |E_1| + |E_2| + |E_3|$, such that for all 3-topological order preservers $(h_1,h_2,h_3)$, we have
    \[
    \mathrm{range}(h_1,h_2,h_3) \ge 2^{\Omega(n+m)}.
    \]
\end{restatable}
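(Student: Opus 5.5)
The plan is to build $(G_1,G_2,G_3)$ as a chain of $k$ constant-size gadgets. Each gadget adds $O(1)$ new vertices and edges, and each one forces some difference $|h_i(b)-h_i(a)|$ to grow by a constant factor (Fibonacci-type growth is enough). Then $n+m=O(k)$ and the range is at least $\phi^{k}=2^{\Omega(n+m)}$.

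\textbf{Transfer primitive.} The key observation is that the identity $h_3=h_1+h_2$ lets a large forced difference in one ordering move into another. Suppose $h_1(v)-h_1(u)\ge X$ is forced, say by a $G_1$-path from $u$ to $v$ or inductively. If we add the edge $(v,u)$ to $G_3$, then
\[
h_2(u)-h_2(v)=\bigl(h_3(u)-h_3(v)\bigr)+\bigl(h_1(v)-h_1(u)\bigr)\ge X+1 .
\]
So a large $h_1$-gap on a pair becomes an equally large, oppositely oriented $h_2$-gap on the same pair. Symmetric versions move gaps between any two of $h_1,h_2,h_3$. For example, $G_1$ edge $u\to v$, $G_2$ edge $v\to u$ and $G_3$ edge $u\to v$ give $h_1(v)-h_1(u)\ge 2$, which serves as the base case.

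\textbf{Doubling step.} Gaps only add when they lie along a consistently directed chain in the same ordering. From $h_2(a)-h_2(b)\ge X$ and $h_2(b)-h_2(c)\ge Y$ we get $h_2(a)-h_2(c)\ge X+Y$. So I would maintain the following invariant at level $j$: there are vertices $a_j,b_j,c_j$ such that the two consecutive gaps $(a_j,b_j)$ and $(b_j,c_j)$ in one ordering $h_{\sigma(j)}$ are at least $F_j$ and $F_{j-1}$. To pass to level $j+1$, concatenate the two gaps to get a gap $\ge F_{j+1}$ on $(a_j,c_j)$. Then use the transfer primitive, through $O(1)$ fresh vertices, to move this gap into the next ordering $h_{\sigma(j+1)}$. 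The previous gap $F_j$ rides along in the same way, so that the new pair of consecutive gaps is $(F_{j+1},F_j)$. Fresh vertices are attached to old ones by single edges, for instance a $G_2$ edge $c_j\to x$, which bounds $h_2(x)$ relative to $h_2(c_j)$. I would cycle $\sigma$ through $1,2,3$ so that edges of each $G_i$ are spread out and no single $G_i$ accumulates both directions between the same vertices.

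\textbf{Main obstacle: acyclicity and feasibility.} The naive scheme of putting $G_3$ edges $v\to u$ and $w\to v$ and then closing back with $u\to w$ creates a cycle. This is exactly the reason for introducing fresh vertices and rotating the three orderings. I must also show the triple is additive. For this I would write down explicit orderings with $h_i$ values of order $\pm\phi^{j}$ on the level-$j$ gadget vertices, defined inductively. Then I would check edge by edge that every constraint $h_i(\text{tail})+1\le h_i(\text{head})$ holds and that $h_3=h_1+h_2$ pointwise. Exhibiting such a witness proves both that each $G_i$ is acyclic and that the triple is additive. If the Fibonacci bookkeeping does not close up, the fallback is to search by computer for a gadget of bounded size. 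Concretely, I would look for a small LP whose forced ratio (new gap)/(old gap) exceeds $1$, which can be tested by an LP on the gadget with the old gap normalized. I would then chain copies of it and verify it symbolically.
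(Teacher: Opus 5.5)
Your two primitives are the right ones, and they are exactly what the paper uses. The first is the transfer identity through a $G_3$ edge: the paper's edges $(b_r,a_r),(d_r,c_r)\in E_3$ turn an $h_1$-gap into a reversed $h_2$-gap. The second is additivity of gaps along a monotone chain in one ordering. But the proposal stops before the actual content of the theorem. The statement is an existence claim, and the hard part is not forcing growth, since any infeasible constraint system ``forces'' huge gaps vacuously. The hard part is forcing growth \emph{while keeping the triple additive}. You never specify the gadget. You also explicitly leave open whether the witnessing orderings exist (``if the Fibonacci bookkeeping does not close up'', with computer search as a fallback). Without a concrete triple and an explicit $(h_1,h_2,h_3)$ checked edge by edge, nothing has been proved.

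The step you do describe also fails as stated. Transferring in place on old pairs gives no amplification. Start from $h_1(a_j)<h_1(b_j)<h_1(c_j)$ with gaps $F_j,F_{j-1}$. The $G_3$ edges $(c_j,a_j)$ and $(b_j,a_j)$ give $h_2(a_j)-h_2(c_j)\ge F_{j+1}+1$ and $h_2(a_j)-h_2(b_j)\ge F_j+1$. These two gaps hang from the same top vertex $a_j$, so they are nested rather than consecutive and cannot be added. Transferring the chain link by link just reverses it, with the same total up to an additive constant. Moreover, $G_3$ edges between old vertices pin additional coordinates of those vertices, which is exactly what threatens additivity.

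The missing idea is to \emph{duplicate} a gap onto disjoint fresh pairs, each copy made by two edges of the same graph. You then transfer each copy and join the copies with one edge of the target graph. In the paper, let $D=h_1(c_{r-1})-h_1(b_{r-1})$. The $E_1$ edges $a_r\to b_{r-1}$ and $c_{r-1}\to b_r$ force $h_1(b_r)-h_1(a_r)\ge D+2$. The $E_1$ edges $c_r\to b_{r-1}$ and $c_{r-1}\to d_r$ force the same for $(c_r,d_r)$. The $E_3$ edges transfer both copies to $h_2$, and the single $E_2$ edge $a_r\to d_r$ concatenates them, so $h_2(c_r)-h_2(b_r)\ge 2D$.

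Copying one gap twice already doubles it, so neither your two-gap Fibonacci invariant nor rotating through all three orderings is needed. The paper simply alternates the roles of $h_1$ and $h_2$ level by level and always transfers through $G_3$. With that rigid structure, additivity is certified by the explicit orderings at scale $5^r$ in Lemma~\ref{lem:h_i}, which is the piece your plan leaves undone.
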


Combining this with the reduction of \Cref{thm:reduction} proves our exponential lower bound for gap-1 shortest-paths preservers on DAGs.

To generalize our exponential lower bound to the $\alpha$-stretch case, we use a modification of our construction for the exact version. The structure of the graph remains the same, but the edge weights are carefully updated so that each shortest path in the original graph becomes the \emph{only} $\alpha$-approximate shortest path in the updated graph, forcing us to preserve the same set of paths as before. These weight updates were also done with the help of ChatGPT 5.5 Pro.

Finally, as a secondary result, we show in the appendix that our reduction is actually bidirectional. That is, 3-topological order preservers are essentially equivalent to gap-1 shortest-paths preservers in 3-layered DAGs with 3 vertices in the middle layer. Formally, we prove the following theorem.

\begin{restatable}[Reduction from shortest-paths preservers to 3-topological order preservers]{theorem}{otherdir}\label{thm:other_direction}
Given a 3-layered DAG $G = (V = V_1 \cup V_2 \cup V_3, E, w)$ with $|V_2| = 3$, we can construct an additive triple of DAGs $(G_1^*, G_2^*, G_3^*)$ with $G_i^* = (V^*, E_i^*)$, such that
    \begin{itemize}
        \item 
        $|V^*| = |V_1| + |V_3|$

        \item 
        $|E_1^*| + |E_2^*| + |E_3^*| = 2 \cdot |V_1| \cdot |V_3|$

        \item 
        The minimum possible maximum weight of a gap-1 shortest-paths preserver of $G$ is at most 1 plus the minimum possible range of a 3-topological order preserver of $(G_1^*, G_2^*, G_3^*)$. That is,
        \[
        \min_{G' = (V, E, w')} \left(\max_{e \in E} w'(e)\right) \le 1 + \min_{(h_1^*, h_2^*, h_3^*)} \left( \mathrm{range}(h_1^*, h_2^*, h_3^*)\right)
        \]
        where the first minimum is over gap-1 shortest-paths preservers $G' = (V, E, w')$ of $G$ and the second minimum is over 3-topological order preservers $(h_1^*, h_2^*, h_3^*)$ of $(G_1^*, G_2^*,G_3^*)$.
    \end{itemize}
\end{restatable}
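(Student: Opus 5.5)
The plan is to encode a reweighting of $G$ entirely by the \emph{differences} of weights at each outer vertex. Then each ``path via $x$ beats path via $y$ by at least $1$'' constraint becomes a topological-ordering constraint $h(u)+1\le h(v)$ between a vertex of $V_1$ and a vertex of $V_3$.

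\textbf{Encoding.} Let $V_2=\{a,b,c\}$ and $V^*=V_1\cup V_3$. Given weights $w$, define for $s\in V_1$
\[
f_1(s)=w(s,b)-w(s,a),\qquad f_2(s)=w(s,c)-w(s,b),
\]
and for $t\in V_3$
\[
g_1(t)=w(a,t)-w(b,t),\qquad g_2(t)=w(b,t)-w(c,t).
\]
Put $h_i=f_i$ on $V_1$, $h_i=g_i$ on $V_3$, and $h_3=h_1+h_2$. A direct computation gives, for a pair $(s,t)$,
\[
\ell_a-\ell_b=h_1(t)-h_1(s),\qquad \ell_b-\ell_c=h_2(t)-h_2(s),\qquad \ell_a-\ell_c=h_3(t)-h_3(s),
\]
where $\ell_x=w(s,x)+w(x,t)$ is the length of the path through $x$. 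So ``$\ell_a+1\le\ell_b$'' is exactly the constraint $h_1(t)+1\le h_1(s)$, i.e.\ an edge $t\to s$ in $G_1^*$. Analogous statements hold for the other comparisons and for $G_2^*,G_3^*$. Additivity $h_3=h_1+h_2$ is built in.

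\textbf{Construction of the triple.} First I will fix, for every pair $(s,t)\in V_1\times V_3$, a designated middle vertex $x(s,t)$ with two properties:
\begin{itemize}
\item the path $s\to x\to t$ is a shortest $s$–$t$ path in $G$;
\item these designations are simultaneously realizable with strict inequalities.
\end{itemize}
To obtain both, I will generically perturb the weights of $G$ by a tiny $\varepsilon$, so that shortest paths become unique and remain shortest in $G$. Nonexistent edges are treated as having a huge weight $M$, and for pairs with no path at all the choice of $x$ is arbitrary. For each pair I add exactly two edges, comparing $x(s,t)$ against the other two middle vertices:
\begin{itemize}
\item if $x=a$: the edge $t\to s$ in $G_1^*$ and the edge $t\to s$ in $G_3^*$;
\item if $x=b$: the edge $s\to t$ in $G_1^*$ and the edge $t\to s$ in $G_2^*$;
\item if $x=c$: the edge $s\to t$ in $G_2^*$ and the edge $s\to t$ in $G_3^*$.
\end{itemize}
This yields $2|V_1||V_3|$ edges in total. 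Consider the perturbed weights (with the value $M$), scaled up so that all strict gaps are at least $1$. By the encoding, these weights define $h_1,h_2,h_3$ that are topological orderings with $h_1+h_2=h_3$. So the triple is additive, and in particular acyclic.

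\textbf{Decoding.} Let $(h_1,h_2,h_3)$ be a 3-topological order preserver of range $R$. Set
\[
w(s,a)=\alpha_s,\quad w(s,b)=\alpha_s+h_1(s),\quad w(s,c)=\alpha_s+h_3(s),
\]
\[
w(c,t)=\beta_t,\quad w(b,t)=\beta_t+h_2(t),\quad w(a,t)=\beta_t+h_3(t),
\]
with $\alpha_s,\beta_t$ chosen minimal so that all weights are at least $1$. At $s$, the three weights have spread $\max(|h_1(s)|,|h_2(s)|,|h_3(s)|)$, since the pairwise differences of $0,h_1(s),h_3(s)$ are exactly $h_1(s),h_2(s),h_3(s)$. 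The same holds at $t$. Hence the maximum weight is at most $1+\max_{i,v}|h_i(v)|$.

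The remaining task is to make $|h_i(v)|\le R$ everywhere. The shift $(h_1+c_1,\;h_2+c_2,\;h_3+c_1+c_2)$ preserves all three orderings and additivity. Let $I_i$ be the interval spanned by the values of $h_i$, which has length at most $R$. We need $c_1\in -I_1$, $c_2\in -I_2$, and $c_1+c_2\in -I_3$. This is possible because $h_3=h_1+h_2$ pointwise gives $I_3\subseteq I_1+I_2$ (Minkowski sum), so any point of $-I_3$ decomposes as $c_1+c_2$ with $c_i\in -I_i$. After the shift, $0\in I_i$ for each $i$, so $|h_i|\le R$ and the maximum weight is at most $1+R$.

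\textbf{Correctness.} For each existing pair $(s,t)$, the two edges added for that pair force $\ell_{x(s,t)}+1\le\ell_y$ for the other existing routes $y$. Thus the unique shortest path in the reweighted graph is the designated one, which is shortest in $G$. This gives a gap-1 shortest-paths preserver, and constraints involving nonexistent paths are simply irrelevant.

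\textbf{Main obstacle.} I expect the delicate step to be the tie-breaking in the construction. The designations $x(s,t)$ must be simultaneously consistent, i.e.\ the resulting triple must be additive, while every designated path is still shortest in the original $G$. My first attempt is the generic-perturbation argument described there; if it gets messy, the fallback is to start from the known real-weight linear preserver for DAGs of \cite{bernstein2025graphsshortestpathstructure}.
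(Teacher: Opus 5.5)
Your proof is correct. The encoding and the construction of the triple coincide with the paper's up to a global sign flip: your $h_i$ is the negative of the paper's $h_i^*$, so every edge is reversed. Your perturbation-plus-$M$ device plays the same role as the paper's completion of $G$ with weight $W$ together with its implicit unique-shortest-path assumption. The tie-breaking you flag as the main obstacle is handled cleanly by that perturbation, and no fallback is needed. One wording point: for pairs with no path, the designation must be the one induced by the perturbed $M$-weights, not a free choice, because your additivity argument relies on this.

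The genuine difference is in the decoding. The paper uses only per-middle-vertex (column) offsets. On edges out of $u\in V_1$ it puts $1+h_3^*(u)-\min h_3^*$, $1+h_2^*(u)-\min h_2^*$, and $1$. On edges into $v\in V_3$ it puts $1+T$ minus the analogous normalized quantities. Every $u$--$v$ path then picks up the same offset $2+T$, and each weight lies in $[1,1+T]$ simply because $h_2^*$ and $h_3^*$ each have range at most $T$, so no centering is needed. You instead use per-outer-vertex offsets $\alpha_s,\beta_t$, which bounds the maximum weight by $1+\max_{i,v}|h_i(v)|$. You then invoke the translation freedom $(c_1,c_2,c_1+c_2)$, which is exactly the column-offset symmetry the paper exploits, to center the $h_i$. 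Your route makes both gauge symmetries explicit and treats $h_1,h_2,h_3$ symmetrically; the paper's gives a single explicit formula. Your Minkowski-sum step can be replaced by taking $c_i=-h_i(v_0)$ for any fixed $v_0\in V^*$. This choice automatically satisfies $c_3=c_1+c_2$ because $h_3(v_0)=h_1(v_0)+h_2(v_0)$, and it gives $|h_i(v)|\le R$ directly.
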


The paper is organized as follows. In \Cref{sec:reduction}, we present the reduction from 3-topological order preservers to shortest-paths preservers (\Cref{thm:reduction}). In \Cref{sec:3DAG} we present the exponential lower bound for 3-topological order preservers (\Cref{thm:three_dag_lb}). Together, those two theorems prove the exponential lower bound for integer-weighted shortest-paths preservers (\Cref{thm:main}). In \Cref{sec:approx}, we  generalize the exponential lower bound to the $\alpha$-stretch setting (\Cref{thm:approx}). In \Cref{sec:ub} we prove the linear upper bound for integer-weighted shortest-paths preservers on 3-layered DAGs with 2 vertices in the middle layer (\Cref{thm:upper_bound}). Finally, in \Cref{sec:other}, we show the reduction in the other direction from shortest-paths preservers to 3-topological order preservers (\Cref{thm:other_direction}).

\section{Reduction from 3-Topological Order Preservers to Shortest-Paths Preservers}\label{sec:reduction}

We prove \cref{thm:reduction}:

\reduction*

\subsection{Construction}

We construct $G^* = (V^*, E^*, w)$ as follows (see \Cref{fig:three-layered-partition}): 
\begin{itemize}
    \item 
    $V^*$: Let $V_1^* := V$ be the entire original vertex set, $V_2^* := \{z_1,z_2,z_3\}$, and $V_3^* := V_{3,1}^* \cup V_{3,2}^* \cup V_{3,3}^*$ where $V_{3,i}^* := \{v_e: e \in E_i\}$, so $V_{3,i}^*$ contains one vertex $v_e$ for each original edge $e \in E_i$. 

    \item 
    $E^*$: Between $V_1^*$ and $V_2^*$, we add the complete set of edges $\{(u,z_1), (u,z_2),(u,z_3)\}$ for each $u \in V_1^*$. Between $V_2^*$ and $V_3^*$, we add edges depending on the vertex in $V_3^*$: for each $v_e \in V_{3,1}^*$, we add edges $\{(z_1,v_e), (z_2,v_e)\}$; for each $v_e \in V_{3,2}^*$, we add edges $\{(z_2,v_e), (z_3,v_e)\}$; for each $v_e \in V_{3,3}^*$, we add edges $\{(z_1,v_e), (z_3,v_e)\}$.

    \item 
    $w$: We define the weights based on the topological orderings $h_i$ of $G_i$ with $h_1 + h_2 = h_3$, which exist by additivity of $(G_1,G_2,G_3)$. 
    
    For $u \in V_1^*$, let $w(u,z_1) :=0$, $w(u,z_2) := -2h_1(u)$, and $w(u,z_3) := -2h_3(u)$.

    For $v_e \in V_3^*$, suppose $e = (u,v)$. If $v_e \in V_{3,1}^*$, then let $w(z_1,v_e) := 1$ and $w(z_2,v_e) := 2h_1(u) + 2$. If $v_e \in V_{3,2}^*$, then let $w(z_2,v_e) := 1$ and $w(z_3,v_e) := 2h_2(u) +2$. If $v_e \in V_{3,3}^*$, then let $w(z_1,v_e) := 1$ and $w(z_3,v_e) := 2h_3(u) + 2$. 

    Note that, depending on the values of $h_i$, some of the above weights could be negative. We claim that it is okay to have negative weights because we can add some $C > 0$ to all edge weights while preserving all shortest paths: the weight of each path of the form $a \to b \to c$ is increased by $2C$, so shortest paths remain shortest. 
\end{itemize}

\begin{figure}[htbp]
\centering

\begin{tikzpicture}[
vertex/.style={circle, draw, inner sep=1.2pt, minimum size=20pt},
blackedge/.style={->, >=Stealth, thick, draw=black},
rededge/.style={->, >=Stealth, thick, draw=red},
setbox/.style={draw, rounded corners, thick, inner sep=8pt},
layerlabel/.style={font=\large},
partlabel/.style={font=\normalsize},
pad/.style={inner sep=0pt, minimum size=0pt},
weightlabel/.style={font=\scriptsize, inner sep=0pt}
]

\def\xone{-5}
\def\xtwo{0}
\def\xthree{5}

\node (V1topdots) at (\xone,2.5) {$\vdots$};
\node[vertex] (u) at (\xone,1.5) {$u$};
\node (V1middots) at (\xone,0.0) {$\vdots$};
\node[vertex] (v) at (\xone,-1.5) {$v$};
\node (V1botdots) at (\xone,-2.5) {$\vdots$};

\node[setbox, fit=(V1topdots)(u)(V1middots)(v)(V1botdots)] (V1box) {};
\node[layerlabel, above=4pt of V1box] {$V_1^* = V$};

\node[vertex] (z1) at (\xtwo,1.2) {$z_1$};
\node[vertex] (z2) at (\xtwo,0.0) {$z_2$};
\node[vertex] (z3) at (\xtwo,-1.2) {$z_3$};

\node[setbox, fit=(z1)(z2)(z3)] (V2box) {};
\node[layerlabel, above=4pt of V2box] {$V_2^*$};

\node[pad]    (V31top) at (\xthree,3.10) {};
\node[vertex] (ve)     at (\xthree,2.75) {$v_e$};
\node         (V3dots1) at (\xthree,2.05) {$\vdots$};
\node[pad]    (V31bot) at (\xthree,1.70) {};
\node[setbox, fit=(V31top)(ve)(V3dots1)(V31bot)] (V3box1) {};
\node[partlabel, right=8pt of V3box1] {$V_{3,1}^*$};

\node[pad]    (V32top) at (\xthree,0.65) {};
\node[vertex] (w2)     at (\xthree,0.30) {};
\node         (V3dots2) at (\xthree,-0.40) {$\vdots$};
\node[pad]    (V32bot) at (\xthree,-0.75) {};
\node[setbox, fit=(V32top)(w2)(V3dots2)(V32bot)] (V3box2) {};
\node[partlabel, right=8pt of V3box2] {$V_{3,2}^*$};

\node[pad]    (V33top) at (\xthree,-1.80) {};
\node[vertex] (w3)     at (\xthree,-2.15) {};
\node         (V3dots3) at (\xthree,-2.85) {$\vdots$};
\node[pad]    (V33bot) at (\xthree,-3.20) {};
\node[setbox, fit=(V33top)(w3)(V3dots3)(V33bot)] (V3box3) {};
\node[partlabel, right=8pt of V3box3] {$V_{3,3}^*$};

\node[layerlabel] at (\xthree,3.80) {$V_3^*$};

\draw[rededge] (u) --
    node[weightlabel, pos=0.5, above=3pt] {$0$}
    (z1);

\draw[blackedge] (u) --
    node[weightlabel, pos=0.5, above=3pt] {$-2h_1(u)$}
    (z2);

\draw[blackedge] (u) --
    node[weightlabel, pos=0.5, above=3pt] {$-2h_3(u)$}
    (z3);

\draw[blackedge] (v) --
    node[weightlabel, pos=0.5, below=3pt] {$0$}
    (z1);

\draw[rededge] (v) --
    node[weightlabel, pos=0.5, below=3pt] {$-2h_1(v)$}
    (z2);

\draw[blackedge] (v) --
    node[weightlabel, pos=0.5, below=3pt] {$-2h_3(v)$}
    (z3);

\draw[rededge] (z1) --
    node[weightlabel, pos=0.75, above=3pt] {$1$}
    (ve);

\draw[rededge] (z2) --
    node[weightlabel, pos=0.68, below=0pt] {$2h_1(u)+2$}
    (ve);

\draw[blackedge] (z2) --
    node[weightlabel, pos=0.75, above=3pt] {$1$}
    (w2);

\draw[blackedge] (z3) --
    node[weightlabel, pos=0.68, above=0pt] {$2h_2(u)+2$}
    (w2);

\draw[blackedge] (z1) --
    node[weightlabel, pos=0.75, above=3pt] {$1$}
    (w3);

\draw[blackedge] (z3) --
    node[weightlabel, pos=0.68, below=4pt] {$2h_3(u)+2$}
    (w3);

\end{tikzpicture}

\caption{The construction of $G^*$. For $v_e \in V_{3,1}^*$ with $e = (u,v)$, the red paths are the unique shortest paths from $u$ to $v_e$ and from $v$ to $v_e$. To make all edge weights positive, imagine adding a sufficiently large $C > 0$ to all edge weights. }
\label{fig:three-layered-partition}

\end{figure}

\subsection{Intuition Behind the Construction}
\label{subsec:intuition}

The weight assignment may seem unintuitive at first, so we provide a brief roadmap of the structure of the proof. The formal analysis is self-contained within \Cref{subsec:an}.

Ultimately we wish to establish a relationship between the range of the existing topological orders $h_i$ of $(G_1,G_2,G_3)$ and the edge weights $w$ in $G^*$, so that a similar relationship can be established between the \emph{minimum-range} 3-topological order preserver $h'_i$ of $(G_1,G_2,G_3)$  and the shortest path preserver of $G^*$ with \emph{minimized maximum weight}. 
This way, we will be able to show that a valid shortest-path preserving reweighting $w'$ of $G^*$ can be used to construct valid topological orderings $h'_i$. But how do we establish such a relationship? 

First, the fact that certain paths are shortest and the rest are \emph{not} shortest implies a set of inequalities on the reweighted edge weights of $G^*$. 
The $h'_i$ values will then be chosen based on the reweighted edge weights $w'$ of various edges so that substituting the $h'_i$ values into the shortest-path-based inequalities implies that the $h'_i$ are valid topological orders with $h'_1+h'_2=h'_3$. 

In particular, if $a \rightarrow b \rightarrow c$ is a shortest path from $a\in V^*_1$ to $c\in V^*_3$ and $a \rightarrow b' \rightarrow c$ is \emph{not} a shortest path, then we know that $w(a,b)+w(b,c)<w(a,b')+w(b',c)$, or equivalently, $w(a,b) - w(a,b') < w(b',c) - w(b,c)$. 
Writing the inequality in the second way reveals two \emph{differences} between edge weights that are of interest. The construction is such that the first difference $(w(a,b)-w(a,b'))$ captures the $h'_i$ values according to the following equations, established in \cref{lem:h_i'}: for each $u \in V$, we let 
\[
h_1'(u) := w'(u,z_1) - w'(u,z_2), \quad h_2'(u) := w'(u,z_2) - w'(u,z_3), \quad h_3'(u) := w'(u,z_1) - w'(u,z_3).
\]
Note that these equations indeed imply the desired fact that $h'_1+h'_2=h'_3$ due the cancellation of $w'(u,z_2)$. 

This is the motivation behind the following weights specified in the construction: $w(u,z_1) :=0$, $w(u,z_2) := -2h_1(u)$, and $w(u,z_3) := -2h_3(u)$. With these weights, indeed the above equations are satisfied for $w$ and $h_i$, but with a factor 2 of slack for convenience. 
However, this alone isn't enough because we also want these equations to hold (up to constant factors) 
for \emph{any} reweighting $w'$. For this we need the shortest path structure to be ``rigid'' enough. The structure that we will enforce is given by \cref{lem:sp_structure} and paraphrased here:

Given an edge $(u,v)\in E_1$, the shortest path from the copy of $u$ (respectively $v$) in $V^*_1$ to the vertex $v_e$ representing edge $e=(u,v)$ in $V^*_{3,1}$  goes through $z_1$ (respectively $z_2$). Similarly, for $V^*_{3,2}$ and $V^*_{3,3}$, the corresponding paths go through $z_2,z_3$ and $z_1,z_3$, respectively.

With this shortest path structure, as per the above equations, the construction assigns the first difference (corresponding to $(w(a,b)-w(a,b'))$) to be $2h_1(u)$ for the shortest path $u\rightarrow z_1 \rightarrow v_e$ compared to the not-shortest path $u\rightarrow z_2 \rightarrow v_e$, and to be $2h_1(v)$ for the shortest path $v\rightarrow z_2 \rightarrow v_e$ compared to the not-shortest path $v\rightarrow z_1 \rightarrow v_e$ (and analogously for $h_2$ and $h_3$). To respect the desired shortest path structure we need to consider the second difference (corresponding to $(w(b',c)-w(b,c))$) for both of those shortest/not-shortest path pairs. In particular, the second difference for these pairs of paths are both $w(z_2,v_e) - w(z_1,v_e)$. 
For this reason, to respect the specified shortest path structure, we need the second difference to be \emph{between} the two first differences $2h_1(u)$ and $2h_1(v)$. 

To achieve this, the construction sets the second difference to $2h_1(u)+1$, which is clearly above $2h_1(u)$ but is below $2h_1(v)$ because $(u,v)$ is an edge in $G_1$, and $h_1$ is a valid topological order of $G_1$.  This line of reasoning is why it is useful for the vertices in $V^*_3$ to represent \emph{edges} each $G_i$, while the vertices in $V^*_1$ represent vertices of the $G_i$'s. Finally, to ensure that the second difference is indeed $2h_1(u)+1$, the construction assigns $w(z_1,v_e) = 1$ and $w(z_2,v_e) = 2h_1(u) + 2$ so that $w(z_2,v_e)-w(z_1,v_e)=2h_1(u) + 1$, and analogously for $h_2$ and $h_3$.

\subsection{Analysis}\label{subsec:an}

We prove some properties about $G^*$: 

\begin{lemma}\label{lem:differences}
    For each $u \in V$, we have 
    \[
    2h_1(u) = w(u,z_1) - w(u,z_2), \quad 2h_2(u) = w(u,z_2) - w(u,z_3), \quad 2h_3(u) = w(u,z_1) - w(u,z_3).
    \]
    
\end{lemma}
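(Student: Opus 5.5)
This is a direct computation from the weight assignment in the construction of $G^*$. For each $u \in V = V_1^*$ the construction sets $w(u,z_1) = 0$, $w(u,z_2) = -2h_1(u)$ and $w(u,z_3) = -2h_3(u)$. I will substitute these values into each of the three claimed identities in turn.

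For the first identity, $w(u,z_1) - w(u,z_2) = 0 - (-2h_1(u)) = 2h_1(u)$. For the third, $w(u,z_1) - w(u,z_3) = 0 - (-2h_3(u)) = 2h_3(u)$.

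For the second identity, $w(u,z_2) - w(u,z_3) = -2h_1(u) + 2h_3(u) = 2(h_3(u) - h_1(u))$. By the choice of $(h_1,h_2,h_3)$, which are topological orderings of $G_1, G_2, G_3$ with $h_1 + h_2 = h_3$ (they exist by additivity of the triple), we have $h_3(u) - h_1(u) = h_2(u)$. So this difference equals $2h_2(u)$.

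One point needs checking. If we shift all weights by a constant $C > 0$ to make them positive, as remarked after the construction, every edge $(u,z_j)$ gains the same $C$. The differences are therefore unchanged, so the lemma holds whether or not the shift is applied. There is no real obstacle here; the only step that is not pure substitution is invoking $h_1 + h_2 = h_3$ for the middle identity.
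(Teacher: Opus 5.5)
Your proposal is correct and follows the paper's proof essentially verbatim: direct substitution of $w(u,z_1)=0$, $w(u,z_2)=-2h_1(u)$, $w(u,z_3)=-2h_3(u)$, with $h_1+h_2=h_3$ used only for the middle identity. Your remark that adding a constant $C$ to every edge weight leaves these differences unchanged also matches the paper's closing note.
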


\begin{proof}
    This follows from the definition of $w$, using the fact that $h_1 + h_2 = h_3$: we have
    \[
    w(u,z_1) - w(u,z_2) = 0 - (-2h_1(u)) = 2h_1(u),
    \]
    \[
    w(u,z_2) - w(u,z_3) = 2(h_3(u) - h_1(u)) = 2h_2(u),
    \]
    \[
    w(u,z_1) - w(u,z_3) = 0 - (-2h_3(u)) = 2h_3(u).
    \]
    Note that adding some $C$ to all edge weights doesn't change the difference between 2 edge weights, so \cref{lem:differences} still holds after we add $C$ to all edge weights.
\end{proof}

\begin{lemma}\label{lem:sp_structure}
    $G^*$ has the following shortest path structure:
    \begin{itemize}
        \item 
        For each $v_e \in V_{3,1}^*$ with $e = (u,v)$, the unique shortest path from $u$ to $v_e$ is $u \to z_1 \to v_e$, and the unique shortest path from $v$ to $v_e$ is $v \to z_2 \to v_e$. 

        \item 
        For each $v_e \in V_{3,2}^*$ with $e = (u,v)$, the unique shortest path from $u$ to $v_e$ is $u \to z_2 \to v_e$, and the unique shortest path from $v$ to $v_e$ is $v \to z_3 \to v_e$.

        \item 
        For each $v_e \in V_{3,3}^*$ with $e = (u,v)$, the unique shortest path from $u$ to $v_e$ is $u \to z_1 \to v_e$, and the unique shortest path from $v$ to $v_e$ is $v \to z_3 \to v_e$.
    \end{itemize}
\end{lemma}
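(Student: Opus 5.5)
Each $v_e \in V_3^*$ has exactly two in-neighbors in $V_2^*$, so any $a$-to-$v_e$ path with $a \in V_1^*$ is one of exactly two paths $a \to z \to v_e$. Uniqueness of the shortest path therefore reduces to a single strict inequality between these two candidates. A common shift by $C$ changes both path weights by $2C$, so I will work directly with the unshifted weights $w$.

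I treat the case $v_e \in V_{3,1}^*$, $e=(u,v)\in E_1$, in detail. For a source $a \in \{u,v\}$, compare
\[
P_1(a) = w(a,z_1) + w(z_1,v_e), \qquad P_2(a) = w(a,z_2) + w(z_2,v_e).
\]
Their difference is
\[
P_2(a)-P_1(a) = -\bigl(w(a,z_1)-w(a,z_2)\bigr) + \bigl(w(z_2,v_e)-w(z_1,v_e)\bigr) = -2h_1(a) + 2h_1(u)+1,
\]
using \cref{lem:differences} and the construction $w(z_2,v_e)-w(z_1,v_e) = 2h_1(u)+2-1$.

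For $a=u$ the difference is $1>0$, so $u\to z_1\to v_e$ is the unique shortest path. For $a=v$ the difference is $2h_1(u)-2h_1(v)+1 \le -2+1 = -1 < 0$. This holds because $h_1$ is a topological ordering of $G_1$ and $(u,v)\in E_1$, giving $h_1(u)+1 \le h_1(v)$. Hence $v\to z_2\to v_e$ is the unique shortest path.

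The other two cases are identical, with $(z_1,z_2,h_1)$ replaced by $(z_2,z_3,h_2)$ and by $(z_1,z_3,h_3)$. The relevant first differences $w(a,z_2)-w(a,z_3)=2h_2(a)$ and $w(a,z_1)-w(a,z_3)=2h_3(a)$ come from \cref{lem:differences}. The second differences are $w(z_3,v_e)-w(z_2,v_e)=2h_2(u)+1$ and $w(z_3,v_e)-w(z_1,v_e)=2h_3(u)+1$. The topological-ordering property of $h_2$ for $G_2$ and of $h_3$ for $G_3$ then gives the sign of the difference at $a=v$.

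No step is a real obstacle. The only point needing care is observing that $v_e$ has exactly two in-neighbors, so only two paths compete. I also note that the gap between the two candidates is in fact at least $1$ in every case, which will be useful later.
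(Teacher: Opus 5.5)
Your proof is correct and follows essentially the same argument as the paper. You reduce to the two candidate paths through the in-neighbors of $v_e$ and compare the first difference $2h_i(a)$ from \cref{lem:differences} against the second difference $2h_i(u)+1$, which yields a gap of exactly $1$ at $u$ and a gap of at least $1$ at $v$ via $h_i(v)\ge h_i(u)+1$.
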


\begin{proof}
    Suppose $v_e \in V_{3,1}^*$ with $e = (u,v)$. Recall that $w(z_2,v_e) = 2h_1(u) + 2$ and $w(z_1,v_e) = 1$, so we have 
    \begin{equation}
        w(z_2,v_e) - w(z_1,v_e) = 2h_1(u) + 1 
        \label{eq:ve_difference}
    \end{equation}
    We first look at the shortest path from $u$ to $v_e$. Since the only paths between $u$ and $v_e$ are $u \to z_1 \to v_e$ and $u \to z_2 \to v_e$, it is enough to check that $w(u \to z_2 \to v_e) > w(u \to z_1 \to v_e)$. We have 
    \begin{align*}
        w(u \to z_2 \to v_e) - w(u \to z_1 \to v_e) &= (w(z_2,v_e) - w(z_1,v_e)) - (w(u,z_1) - w(u,z_2)) \\
            &= (2h_1(u) + 1) - 2h_1(u) \qquad (\text{by \cref{eq:ve_difference} and \cref{lem:differences}}) \\ 
        &= 1.
    \end{align*}
    Using terms from the discussion in \cref{subsec:intuition}, this is saying the second difference at $u$ is at least $1$ greater than the first difference at $u$.
    
    For the shortest path from $v$ to $v_e$, it is enough to check that $w(v \to z_1 \to v_e) > w(v \to z_2 \to v_e)$. We have 
    \begin{align*}
        w(v \to z_1 \to v_e) - w(v \to z_2 \to v_e) &= (w(v,z_1) - w(v,z_2)) - (w(z_2,v_e) - w(z_1,v_e)) \\ 
        &= 2h_1(v) - (2h_1(u) + 1) \qquad (\text{by \cref{eq:ve_difference} and \cref{lem:differences}}) \\
        &\ge 1,
    \end{align*}
    where $2(h_1(v) - h_1(u)) \ge 2$ because $h_1$ is a topological ordering of $G_1$, and $(u,v)$ is an edge in $G_1$. Similarly, using terms from the discussion in \cref{subsec:intuition}, this is saying the first difference at $v$ is at least 1 greater than the second difference at $v$.

    The proof is exactly the same for $v_e \in V_{3,2}^*, V_{3,3}^*$. For $v_e \in V_{3,2}^*$, we replace all occurrences of $z_1$ by $z_2$, all occurrences of $z_2$ by $z_3$, and all occurrences of $h_1$ by $h_2$ in the above argument. Then \cref{eq:ve_difference} still holds since we set the weights $w(z_2,v_e) = 1$ and $w(z_3,v_e) = 2h_2(u) + 2$, and we use the second difference in \cref{lem:differences} instead of the first difference. For $v_e \in V_{3,3}^*$, we keep all occurrences of $z_1$, replace all occurrences of $z_2$ by $z_3$, and replace all occurrences of $h_1$ by $h_3$. 
\end{proof}

Now, we prove the key fact that any gap-1 reweighting $w'$ of $G^*$ that preserves all shortest paths can be used to construct a 3-topological order preserver $(h_1',h_2',h_3')$.

Let $G' = (V^*, E^*, w')$ be any gap-1 shortest-paths preserver of $G^*$. We define $h_i': V \to \mathbb Z$ as follows: for each $u \in V$,
\[
h_1'(u) := w'(u,z_1) - w'(u,z_2), \quad h_2'(u) := w'(u,z_2) - w'(u,z_3), \quad h_3'(u) := w'(u,z_1) - w'(u,z_3).
\]

\begin{lemma}\label{lem:h_i'}
    $(h_1',h_2',h_3')$ is a 3-topological order preserver of $(G_1,G_2,G_3)$. That is, $h_i'$ is a topological ordering of $G_i$, and $h_1' + h_2' = h_3'$.
\end{lemma}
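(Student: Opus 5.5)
The additivity $h_1'+h_2'=h_3'$ is immediate from the definitions: the sum $\bigl(w'(u,z_1)-w'(u,z_2)\bigr)+\bigl(w'(u,z_2)-w'(u,z_3)\bigr)$ telescopes to $w'(u,z_1)-w'(u,z_3)$. So the work is to show that each $h_i'$ is a topological ordering of $G_i$. That is, for every $e=(u,v)\in E_i$ we need $h_i'(u)+1\le h_i'(v)$. I will treat $i=1$ in detail. The cases $i=2,3$ follow by the same relabeling of $z$'s used at the end of the proof of \cref{lem:sp_structure}: use $(z_2,z_3)$ for $i=2$ and $(z_1,z_3)$ for $i=3$.

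Fix $e=(u,v)\in E_1$. By \cref{lem:sp_structure}, there are exactly two paths from $u$ to $v_e$ in $G^*$, and the unique shortest one is $u\to z_1\to v_e$. Hence $u\to z_2\to v_e$ is not a shortest path in $G^*$. Since $G'$ is a shortest-paths preserver, every shortest path of $G'$ from $u$ to $v_e$ must be a shortest path of $G^*$. So in $G'$ the path $u\to z_1\to v_e$ is the unique shortest path, and $u\to z_2\to v_e$ is not shortest. The gap-1 condition then gives
\[
w'(u,z_1)+w'(z_1,v_e)+1\le w'(u,z_2)+w'(z_2,v_e),
\]
that is, $h_1'(u)+1\le D_e$, where $D_e:=w'(z_2,v_e)-w'(z_1,v_e)$.

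Symmetrically, the path $v\to z_2\to v_e$ is the unique shortest path from $v$ to $v_e$ in $G^*$, so in $G'$ the path $v\to z_1\to v_e$ is not shortest. The gap-1 condition gives
\[
w'(v,z_2)+w'(z_2,v_e)+1\le w'(v,z_1)+w'(z_1,v_e),
\]
that is, $D_e+1\le h_1'(v)$.

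Chaining the two inequalities yields $h_1'(u)+2\le h_1'(v)$, which is stronger than required. For $i=2$ the quantity $D_e$ becomes $w'(z_3,v_e)-w'(z_2,v_e)$, and for $i=3$ it becomes $w'(z_3,v_e)-w'(z_1,v_e)$; the same two inequalities sandwich $D_e$ between $h_i'(u)$ and $h_i'(v)$.

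The only step that needs care is converting "not shortest in $G^*$" into "not shortest in $G'$" so that the gap-1 inequality applies. This holds because each pair $(u,v_e)$ has only two connecting paths. The preserver property forces the shortest-path set in $G'$ to be a subset of that in $G^*$, which is the singleton $\{u\to z_1\to v_e\}$. The other path is therefore strictly non-shortest in $G'$. The rest of the argument is bookkeeping.
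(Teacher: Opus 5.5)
Your proof is correct and follows the paper's argument essentially verbatim. You sandwich $D_e = w'(z_2,v_e)-w'(z_1,v_e)$ between $h_1'(u)+1$ and $h_1'(v)-1$ using the gap-1 condition at $v_e$, get $h_1'(u)+2\le h_1'(v)$, and handle the other indices by relabeling; your explicit justification that $u\to z_2\to v_e$ cannot be shortest in $G'$ only spells out a step the paper states in one line.
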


\begin{proof}
    It is clear from the definition that $h_1' + h_2' = h_3'$. We show that $h_1'$ is a topological ordering of $G_1$. Suppose $e = (u,v) \in E_1$ and consider $v_e \in V_{3,1}^*$. Since $G'$ is gap-1 shortest-paths preserving, by \cref{lem:sp_structure} the unique shortest path in $G'$ from $u$ to $v_e$ is $u \to z_1 \to v_e$, so $w'(u \to z_1 \to v_e) + 1\le  w'(u \to z_2 \to v_e)$, which expands to
    \[
    h_1'(u) = w'(u,z_1) - w'(u,z_2) \le w'(z_2,v_e) - w'(z_1,v_e)-1.
    \]
    Also, by \cref{lem:sp_structure} the unique shortest path in $G'$ from $v$ to $v_e$ is $v \to z_2 \to v_e$, so $w'(v \to z_2 \to v_e) +1\le w'(v \to z_1 \to v_e)$, which expands to
    \[
    h_1'(v) = w'(v,z_1) - w'(v,z_2) \ge w'(z_2,v_e) - w'(z_1,v_e)+1.
    \]
    So $h_1'(u) +2\le h_1'(v)$, which proves $h_1'$ is a topological ordering of $G_1$.

    The argument is exactly the same for $h_2',h_3'$. To show $h_2'$ is a topological ordering of $G_2$, we look at $e = (u,v) \in E_2$ with $v_e \in V_{3,2}^*$. Using what \cref{lem:sp_structure} says about the shortest path structure of $v_e$, we can use the above argument with all occurrences of $h_1'$ replaced by $h_2'$, and $z_1, z_2$ replaced by $z_2,z_3$, respectively. For $h_3'$, by considering $e = (u,v) \in E_3$ and $v_e \in V_{3,3}^*$, we can use the above argument with all occurrences of $h_1'$ replaced by $h_3'$, and all occurrences of $z_2$ replaced by $z_3$. 
\end{proof}

\begin{lemma}\label{lem:bounded_range}
    We have 
    \[\mathrm{range}(h_1',h_2',h_3') \le 2 \cdot \max_{e \in E^*} w'(e).
    \]
\end{lemma}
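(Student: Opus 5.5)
Since $G'$ is a reweighting with positive edge weights, we have $0 < w'(e) \le M$ for every $e \in E^*$, where $M := \max_{e \in E^*} w'(e)$. The plan is to bound each $h_i'(u)$ directly from its definition as a difference of two edge weights, and then take the spread over $u$.

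For each $u \in V$ and each $i$, $h_i'(u)$ has the form $w'(u,z_a) - w'(u,z_b)$ with $a \ne b$. Both terms lie in $(0, M]$, so
\[
-M < h_i'(u) < M .
\]
Taking any two vertices $u, v \in V$ then gives
\[
h_i'(u) - h_i'(v) < 2M .
\]
Hence $\max_{v} h_i'(v) - \min_{v} h_i'(v) \le 2M$ for each $i \in \{1,2,3\}$. Maximizing over $i$ yields $\mathrm{range}(h_1',h_2',h_3') \le 2M$, as claimed.

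The only point needing care is that the weights of $G'$ are strictly positive, which holds by definition of a (positively) reweighted graph. This is what lets each $h_i'$ lie within an interval of length $2M$ rather than something larger. Nothing from the shortest-path structure in \cref{lem:sp_structure} or \cref{lem:h_i'} is needed here.

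Combined with \cref{lem:h_i'}, this gives a 3-topological order preserver with range at most $2M$. Choosing $G'$ to minimize $M$ then yields the inequality in \cref{thm:reduction}.
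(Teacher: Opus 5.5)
Your proof is correct and is essentially the paper's argument. The paper writes $h_i'(v)-h_i'(u)$ directly as $w'(v,z_j)+w'(u,z_k)-w'(v,z_k)-w'(u,z_j)$ and bounds it by $2\max_{e\in E^*} w'(e)$ using positivity of the weights, which is the same reasoning as your per-vertex bound $-M < h_i'(u) < M$.
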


\begin{proof}
    For any $i \in \{1,2,3\}$ and $u,v \in V$, we know that for indices 
    \[
    (j,k) = \begin{cases}
        (1,2) & i = 1 \\
        (2,3) & i = 2 \\
        (1,3) & i = 3
    \end{cases}
    \]
    we have
    \[
    h_i'(v) - h_i'(u) = w'(v,z_j) + w'(u,z_k) - w'(v,z_k) - w'(u,z_j)\le 2 \cdot \max_{e \in E^*} w'(e).
    \]
    Since the above holds for all $i \in \{1,2,3\}$ and $u,v \in V$, we have
    \[
    \mathrm{range}(h_1',h_2',h_3') = \max_{i \in \{1,2,3\}} \left(\max_{v \in V} h_i'(v) - \min_{v \in V} h_i'(v) \right) \le 2 \cdot \max_{e \in E^*} w'(e).
    \]
\end{proof}

Recall that the original inequality we are trying to prove is
\begin{equation}
    \min_{(h_1',h_2',h_3')} \left(\mathrm{range}(h_1',h_2',h_3')\right) \le 2 \cdot \min_{G' = (V^*, E^*,w')} \left(\max_{e \in E^*} w'(e)\right)
    \label{ineq:original}
\end{equation}
where the first minimum is over 3-topological order preservers $(h_1',h_2',h_3')$ of $(G_1,G_2,G_3)$ and the second minimum is over gap-1 shortest-paths preservers $G' = (V^*, E^*, w')$ of $G^*$. 

In \cref{lem:bounded_range} we showed that for any gap-1 shortest-paths preserving reweighting $w'$, there exists $(h_1',h_2',h_3')$ with 
\[
\mathrm{range}(h_1',h_2',h_3') \le 2 \cdot \max_{e \in E^*} w'(e).
\]
In particular, this inequality holds for the reweighting $w'$ that minimizes $\max_{e \in E^*} w'(e)$. Since the left hand side of \cref{ineq:original} takes a minimum over all 3-topological order preservers and $(h_1',h_2',h_3')$ is a 3-topological order preserver by \cref{lem:h_i'}, we have proved \cref{ineq:original}.

This completes the proof of \cref{thm:reduction}.

\section{Lower Bound for 3-Topological Order Preservers}\label{sec:3DAG}

We prove \cref{thm:three_dag_lb}:

\threedaglb*

Before proving \cref{thm:three_dag_lb}, we will show that \cref{thm:three_dag_lb} implies \cref{thm:main} using the reduction from \cref{sec:reduction}. Recall \cref{thm:main}:

\maintheorem*

\begin{proof}
    Let $(G_1,G_2,G_3)$ be an exponential lower bound additive triple of DAGs from \cref{thm:three_dag_lb}, with $G_i = (V,E_i)$, $n = |V|$, and $m = |E_1| + |E_2| + |E_3|$. By \cref{thm:reduction}, we can construct a 3-layered DAG $G^* = (V^* = V_1^* \cup V_2^* \cup V_3^*, E^*, w)$ with $|V_1^*| = n$, $|V_2^*| = 3$, $|V_3^*| = m$, such that for any shortest-paths preserving edge reweighting $w'$ of $G^*$, 
    \begin{align*}
        \max_{e \in E^*} w'(e) &\ge \frac 12 \cdot \min_{(h_1',h_2',h_3') \text{ 3-topological order preserver of } (G_1,G_2,G_3)}  \left(\mathrm{range}(h_1',h_2',h_3')\right) \\
        &\ge 2^{\Omega(n+m)} \\
        &= 2^{\Omega(|V^*|)}.
    \end{align*}
\end{proof}

Now we will prove \cref{thm:three_dag_lb}.

\subsection{Construction}\label{subsec:3dag_construction}

Assume $n = 2 + 4t$ for some $t \ge 0$. We define a triple of DAGs $(G_1, G_2, G_3)$ where $G_i = (V, E_i)$ as follows:

\begin{itemize}
    \item 
    The common vertex set is 
    \[
    V := \{b_0,c_0\} \cup \bigcup_{r=1}^t \{a_r,b_r,c_r,d_r\},
    \]
    where level 0 consists of 2 initial vertices $\{b_0,c_0\}$, and $t$ additional levels where level $r \in [t]$ contains the vertices $\{a_r,b_r,c_r,d_r\}$. 

    \item 
    Edges in all $E_i$ either go between vertices in the same level, or go between adjacent levels. The edge sets are shown in \Cref{fig:three-graphs}. Formally, the edge sets are defined as follows:
    \[
    E_1 := \{(b_0,c_0)\} \cup \bigcup_{r\le t\text{ odd}} \{(a_{r},b_{r-1}),(c_{r},b_{r-1}),(c_{r-1},b_r),(c_{r-1},d_r)\} \cup \bigcup_{0 < r \le t \text{ even}} \{(a_r,d_r)\}
    \]
    \[
    E_2 := \bigcup_{0 < r \le t \text{ even}} \{(a_{r},b_{r-1}),(c_{r},b_{r-1}),(c_{r-1},b_r),(c_{r-1},d_r)\} \cup \bigcup_{r\le t \text{ odd}} \{(a_r,d_r)\}
    \]
    \[
    E_3 := \bigcup_{r=1}^t \{(b_r,a_r),(d_r,c_r)\}.
    \]
\end{itemize}

\begin{figure}[htbp]
\centering

\resizebox{\textwidth}{!}{%
\begin{tikzpicture}[
vertex/.style={circle, draw, inner sep=1.2pt, minimum size=22pt},
edge/.style={->, >=Stealth, thick},
blackedge/.style={edge, black},
blueedge/.style={edge, blue},
rededge/.style={edge, red},
graphlabel/.style={font=\large}
]

\def\xa{0}
\def\xb{1.8}
\def\xc{3.6}
\def\xd{5.4}

\def\yzero{0}
\def\yone{-1.5}
\def\ytwo{-3.0}
\def\ythree{-4.5}
\def\yfour{-6.0}
\def\ydots{-7.1}

\def\xsep{8.2}

\newcommand{\PlaceVertices}[1]{%
\foreach \name/\x/\y/\lab in {
b0/\xb/\yzero/$b_0$,
c0/\xc/\yzero/$c_0$,
a1/\xa/\yone/$a_1$,
b1/\xb/\yone/$b_1$,
c1/\xc/\yone/$c_1$,
d1/\xd/\yone/$d_1$,
a2/\xa/\ytwo/$a_2$,
b2/\xb/\ytwo/$b_2$,
c2/\xc/\ytwo/$c_2$,
d2/\xd/\ytwo/$d_2$,
a3/\xa/\ythree/$a_3$,
b3/\xb/\ythree/$b_3$,
c3/\xc/\ythree/$c_3$,
d3/\xd/\ythree/$d_3$,
a4/\xa/\yfour/$a_4$,
b4/\xb/\yfour/$b_4$,
c4/\xc/\yfour/$c_4$,
d4/\xd/\yfour/$d_4$
}{
\node[vertex] (#1\name) at (\x,\y) {\lab};
}
\foreach \x in {\xa,\xb,\xc,\xd}{
\node at (\x,\ydots) {$\vdots$};
}
}

\newcommand{\EdgesGOne}[1]{%
\draw[blackedge] (#1b0) -- (#1c0);
\draw[blueedge]  (#1a1) -- (#1b0);
\draw[blueedge]  (#1c1) -- (#1b0);
\draw[blueedge]  (#1c0) -- (#1b1);
\draw[blueedge]  (#1c0) -- (#1d1);
\draw[rededge]   (#1a2) to[out=35,in=145] (#1d2);

```
\draw[blueedge]  (#1a3) -- (#1b2);
\draw[blueedge]  (#1c3) -- (#1b2);
\draw[blueedge]  (#1c2) -- (#1b3);
\draw[blueedge]  (#1c2) -- (#1d3);

\draw[rededge]   (#1a4) to[out=35,in=145] (#1d4);
```

}

\newcommand{\EdgesGTwo}[1]{%
\draw[blueedge]  (#1a1) to[out=35,in=145] (#1d1);
\draw[rededge]   (#1a2) -- (#1b1);
\draw[rededge]   (#1c2) -- (#1b1);
\draw[rededge]   (#1c1) -- (#1b2);
\draw[rededge]   (#1c1) -- (#1d2);

```
\draw[blueedge]  (#1a3) to[out=35,in=145] (#1d3);

\draw[rededge]   (#1a4) -- (#1b3);
\draw[rededge]   (#1c4) -- (#1b3);
\draw[rededge]   (#1c3) -- (#1b4);
\draw[rededge]   (#1c3) -- (#1d4);
```

}

\newcommand{\EdgesGThree}[1]{%
\draw[blueedge]  (#1b1) -- (#1a1);
\draw[blueedge]  (#1d1) -- (#1c1);
\draw[rededge]   (#1b2) -- (#1a2);
\draw[rededge]   (#1d2) -- (#1c2);

```
\draw[blueedge]  (#1b3) -- (#1a3);
\draw[blueedge]  (#1d3) -- (#1c3);

\draw[rededge]   (#1b4) -- (#1a4);
\draw[rededge]   (#1d4) -- (#1c4);
```

}

\newcommand{\GraphBlock}[4]{%
\begin{scope}[shift={(#1,0)}]
\node[graphlabel] at (-0.9,0.35) {$#2:$};
\PlaceVertices{#3}
#4{#3}
\end{scope}
}

\GraphBlock{0}{G_1}{G1}{\EdgesGOne}
\GraphBlock{\xsep}{G_2}{G2}{\EdgesGTwo}
\GraphBlock{2*\xsep}{G_3}{G3}{\EdgesGThree}

\end{tikzpicture}%
}

\caption{The first few levels of $(G_1,G_2,G_3)$. Future levels continue to alternate between the blue and red patterns.}
\label{fig:three-graphs}

\end{figure}

\subsection{Analysis}

First, we need to show that $(G_1,G_2,G_3)$ is an additive triple of DAGs. 

\begin{lemma}\label{lem:h_i}
    $(G_1,G_2,G_3)$ is an additive triple of DAGs. That is, there exists $h_1,h_2,h_3: V \to \mathbb Z$ such that each $h_i$ is a topological ordering of $G_i$, and $h_1 + h_2 = h_3$.
\end{lemma}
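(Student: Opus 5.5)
The plan is to build $h_1,h_2,h_3$ explicitly, level by level, and define $h_3 := h_1+h_2$ throughout. Additivity then holds by definition. It remains to check that $h_i(x)+1 \le h_i(y)$ for every edge $(x,y)\in E_i$; acyclicity of each $G_i$ follows automatically once such an ordering exists. Every edge lies inside one level or between adjacent levels $r-1$ and $r$. So I assign values to level $0$ first, then to level $r$ assuming all values on levels $<r$ are fixed. Each new level must satisfy the constraints inside it and those linking it to level $r-1$. All values are integers.

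For level $0$, set $h_1(b_0)=0$, $h_1(c_0)=1$ and $h_2(b_0)=h_2(c_0)=0$. This handles the only edge $(b_0,c_0)\in E_1$.

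Now take an odd level $r$. The edges of $E_1$ between levels $r-1$ and $r$ only ask that $a_r,c_r$ sit below $b_{r-1}$ and that $b_r,d_r$ sit above $c_{r-1}$ in $h_1$. So I set
\[
h_1(a_r)=h_1(c_r)=h_1(b_{r-1})-1,\qquad h_1(b_r)=h_1(d_r)=h_1(c_{r-1})+1.
\]
At an odd level, $G_2$ has no edges to level $r-1$ and only the internal edge $(a_r,d_r)$. Also, $E_3$ only has $(b_r,a_r)$ and $(d_r,c_r)$, which in terms of $h_2$ read
\[
h_2(a_r)\ge h_2(b_r)+h_1(b_r)-h_1(a_r)+1,\qquad h_2(c_r)\ge h_2(d_r)+h_1(d_r)-h_1(c_r)+1.
\]
These are satisfied greedily in the order $b_r,a_r,d_r,c_r$:
\[
\begin{aligned}
h_2(b_r)&=0, & h_2(a_r)&=h_1(b_r)-h_1(a_r)+1,\\
h_2(d_r)&=h_2(a_r)+1, & h_2(c_r)&=h_2(d_r)+h_1(d_r)-h_1(c_r)+1.
\end{aligned}
\]
Each right-hand side uses only values already fixed, so every inequality holds with equality or better, whatever the sign of the quantities involved.

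Even levels $r$ are handled by the same scheme with the roles of $h_1$ and $h_2$ swapped. Here the cross edges are in $E_2$, so I place $h_2(a_r),h_2(c_r)$ at $h_2(b_{r-1})-1$ and $h_2(b_r),h_2(d_r)$ at $h_2(c_{r-1})+1$. Then I choose $h_1$ greedily in the order $b_r,a_r,d_r,c_r$ so as to satisfy the internal $E_1$ edge $(a_r,d_r)$ and the $E_3$ edges $(b_r,a_r),(d_r,c_r)$, exactly as above.

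I do not expect a real obstacle; the main work is the bookkeeping needed to confirm there is no hidden cyclic dependency. The key point to check is that at each level, each of $h_1$ and $h_2$ receives lower/upper constraints from only one direction, namely from already-fixed earlier levels or from earlier vertices in the order $b_r,a_r,d_r,c_r$. Constraints from level $r$ to level $r+1$ are deferred to level $r+1$, where they become constraints on the new vertices only. Verifying this amounts to listing, for each of the three edge sets and each parity of $r$, which endpoint is assigned later. The construction is free to let values blow up; this is consistent with the exponential growth that \cref{thm:three_dag_lb} forces, which is in fact visible here in the alternating accumulation of $h_1(d_r)-h_1(c_r)$ terms.
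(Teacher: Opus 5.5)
Your proposal is correct; it differs from the paper in being a recursive greedy construction rather than a closed-form one. The paper writes down explicit values and checks each edge numerically. For odd $r$ it takes $h_1(a_r)=h_1(c_r)=0$, $h_1(b_r)=h_1(d_r)=5^r-2$, $h_2(a_r)=5^r$, $h_2(b_r)=1$, $h_2(c_r)=2\cdot 5^r$, $h_2(d_r)=5^r+1$, and swaps $h_1,h_2$ at even levels. Structurally this is the same pattern as yours. The function carrying the cross-level edges puts $a_r,c_r$ below $b_{r-1}$ and $b_r,d_r$ above $c_{r-1}$, and the other function absorbs $(a_r,d_r)$ and the two $E_3$ edges. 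The paper simply leaves enough slack that one formula works for every $r$, so it needs no dependency bookkeeping.

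Your version does need the bookkeeping facts that $E_1$ has no edges inside an odd level or between an odd level and the next one, and symmetrically for $E_2$ at even levels. These facts hold, so the construction goes through. In exchange, every edge constraint used in the proof of \cref{lem:doubling} is tight in your construction. Writing $D_r$ for the $c_r$--$b_r$ gap in $h_2$ (odd $r$) or $h_1$ (even $r$), you get $D_0=1$ and $D_r=2D_{r-1}+7$. That is exactly the smallest value the gap-$1$ constraints in that proof permit. So your construction grows at the rate the lower bound forces, base $2$ rather than the paper's base $5$, while the paper's formulas are quicker to verify.
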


\begin{proof}
    We define $h_1$, $h_2$ as follows:
    \begin{itemize}
        \item 
        For $r$ odd, define
        \[
        h_1(a_r) = 0, \quad h_1(b_r) = 5^r-2, \quad h_1(c_r) = 0, \quad h_1(d_r) = 5^r-2
        \]
        and
        \[
        h_2(a_r) = 5^r, \quad h_2(b_r) = 1, \quad h_2(c_r) = 2 \cdot 5^r, \quad h_2(d_r) = 5^r+1.
        \]

        \item 
        For $r$ even, we swap the mappings above. Define
        \[
        h_1(a_r) = 5^r, \quad h_1(b_r) = 1, \quad h_1(c_r) = 2 \cdot 5^r, \quad h_1(d_r) = 5^r+1
        \]
        and
        \[
        h_2(a_r) = 0, \quad h_2(b_r) = 5^r-2, \quad h_2(c_r) = 0, \quad h_2(d_r) = 5^r-2.
        \]
    \end{itemize}

    So for all $r \ge 1$, $h_3 = h_1 + h_2$ satisfies
    \[
    h_3(a_r) = 5^r, \quad h_3(b_r) = 5^r-1, \quad h_3(c_r) = 2 \cdot 5^r, \quad h_3(d_r) = 2 \cdot 5^r -1.
    \]

    We check that each $h_i$ is a valid topological ordering for $G_i$.
    \begin{itemize}
        \item 
        $G_1$: For the initial edge we have $h_1(b_0) = 1 < 2 = h_1(c_0)$. For the remaining edges, we check that for all odd $r \le t$, 
        \[
        h_1(a_r) = h_1(c_r) = 0 < 1 = h_1(b_{r-1})
        \]
        \[
        h_1(c_{r-1}) = 2 \cdot 5^{r-1} < 5^{r}-2 = h_1(b_{r}) = h_1(d_{r}),
        \]
        and for all even $0 < r \le t$,
        \[
        h_1(a_{r}) = 5^r < 5^r+1 = h_1(d_{r}).
        \]

        \item 
        $G_2$: For all odd $r \le t$, we have
        \[
        h_2(a_{r}) = 5^{r} < 5^{r}+1 = h_2(d_{r}),
        \]
        and for all even $0 < r \le t$, we have
        \[
        h_2(a_{r}) = h_2(c_{r})  = 0 < 1 = h_2(b_{r-1})
        \]
        \[
        h_2(c_{r-1}) = 2 \cdot 5^{r-1} < 5^{r} - 2 = h_2(b_{r}) = h_2(d_{r}).
        \]

        \item 
        $G_3$: For all $1 \le r \le t$, we have 
        \[
        h_3(b_r) = 5^r-1 < 5^r = h_3(a_r)
        \]
        \[h_3(d_r) = 2 \cdot 5^r - 1 < 2 \cdot 5^r = h_3(c_r).
        \]
    \end{itemize}
    
\end{proof}

It remains to show that for any 3-topological order preserver $(h_1',h_2',h_3')$, there exists $i \in \{1,2,3\}$ and $u,v \in V$ such that $h_i'(v) - h_i'(u) \ge 2^{\Omega(n+m)}$. We make the following key claim:

\begin{lemma}\label{lem:doubling}
    For all odd $r \ge 1$, we have 
    \[
    h_2'(c_r) - h_2'(b_r) \ge 2 \cdot (h_1'(c_{r-1}) - h_1'(b_{r-1})),
    \]
    and for all even $r \ge 2$, we have 
    \[
    h_1'(c_r) - h_1'(b_r) \ge 2 \cdot (h_2'(c_{r-1}) - h_2'(b_{r-1})).
    \]
\end{lemma}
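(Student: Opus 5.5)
The plan is a direct chain of inequalities. I will only use that each $h_i'$ is a topological ordering of $G_i$ (so every edge $(x,y)\in E_i$ gives $h_i'(x)+1\le h_i'(y)$) and the identity $h_2'=h_3'-h_1'$. Consider first odd $r\ge 1$. The edges I intend to use are:
\begin{itemize}
\item from $E_1$: $(a_r,b_{r-1})$, $(c_r,b_{r-1})$, $(c_{r-1},b_r)$, $(c_{r-1},d_r)$;
\item from $E_2$: $(a_r,d_r)$;
\item from $E_3$: $(b_r,a_r)$ and $(d_r,c_r)$.
\end{itemize}
Together these link level $r-1$ to level $r$ across all three orderings.

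First I rewrite the left-hand side by substituting $h_2'=h_3'-h_1'$:
\[
h_2'(c_r)-h_2'(b_r)=h_3'(c_r)-h_3'(b_r)-h_1'(c_r)+h_1'(b_r).
\]
Next I lower bound $h_3'(c_r)$ by walking backwards along $d_r\to c_r$ in $G_3$, then $a_r\to d_r$ in $G_2$, then $b_r\to a_r$ in $G_3$. Each time I pass between orderings I convert using $h_3'=h_1'+h_2'$. This gives
\[
h_3'(c_r)\ \ge\ h_3'(d_r)+1\ =\ h_1'(d_r)+h_2'(d_r)+1\ \ge\ h_1'(d_r)+h_2'(a_r)+2,
\]
and, since $h_2'(a_r)=h_3'(a_r)-h_1'(a_r)\ge h_3'(b_r)+1-h_1'(a_r)$,
\[
h_3'(c_r)\ \ge\ h_1'(d_r)-h_1'(a_r)+h_3'(b_r)+3.
\]
The $h_3'(b_r)$ term then cancels, leaving
\[
h_2'(c_r)-h_2'(b_r)\ \ge\ h_1'(d_r)+h_1'(b_r)-h_1'(a_r)-h_1'(c_r)+3.
\]

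Finally I bound this purely in terms of $h_1'$ using the four $E_1$ edges. The two lower bounds are $h_1'(d_r),h_1'(b_r)\ge h_1'(c_{r-1})+1$. The two upper bounds are $h_1'(a_r),h_1'(c_r)\le h_1'(b_{r-1})-1$. Substituting these yields
\[
h_2'(c_r)-h_2'(b_r)\ \ge\ 2\bigl(h_1'(c_{r-1})-h_1'(b_{r-1})\bigr)+7,
\]
which is stronger than required. For $r=1$ the needed edges from $c_0$ and into $b_0$ are indeed in $E_1$, so no special case arises.

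For even $r\ge 2$ the construction swaps the roles of $E_1$ and $E_2$ exactly, while $E_3$ is unchanged. So the same argument works with $h_1'$ and $h_2'$ interchanged, using $h_1'=h_3'-h_2'$ in place of $h_2'=h_3'-h_1'$.

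The main obstacle is finding the right chain for $h_3'(c_r)$. The goal is to alternate between $G_3$-edges and the single $G_2$-edge $(a_r,d_r)$ so that the $h_3'(b_r)$ terms cancel, and what remains involves only $h_1'$ at the four level-$r$ vertices, each of which is controlled by an $E_1$ edge to level $r-1$. Once that chain is written down, the rest is bookkeeping.
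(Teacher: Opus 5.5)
Your proof is correct and is essentially the paper's argument. It uses the same seven edges: the $E_3$ edges $(b_r,a_r),(d_r,c_r)$ convert the $E_1$-derived bounds on $h_1'(b_r)-h_1'(a_r)$ and $h_1'(d_r)-h_1'(c_r)$ into $h_2'$-differences, which are then joined via $(a_r,d_r)\in E_2$, with $E_1$ and $E_2$ swapped for even $r$. You sum the same inequalities as a single chain and keep the explicit constant $+7$, which the paper's three-piece telescoping also yields but loosens to $>2D_{r-1}$.
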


\begin{proof}
    We first look at the case when $r$ is odd. Let $D_{r-1} := h_1'(c_{r-1}) - h_1'(b_{r-1})$. Using the edges $(a_r,b_{r-1}), (c_{r-1},b_r) \in E_1$, we have 
    \begin{align*}
    h_1'(b_r) - h_1'(a_r) &= (h_1'(b_r) - h_1'(c_{r-1})) + (h_1'(c_{r-1}) - h_1'(b_{r-1})) +( h_1'(b_{r-1}) - h_1'(a_r)) \\
    &\ge 1 + D_{r-1} + 1 \\
    &> D_{r-1}.
    \end{align*}
    Similarly, using the edges $(c_r,b_{r-1}),(c_{r-1},d_r) \in E_1$, we have
    \[
    h_1'(d_r) - h_1'(c_r) > D_{r-1}.
    \]
    Now, using the edge $(b_r,a_r) \in E_3$, we have $h_3'(a_r) > h_3'(b_r)$, so
    \[
    (h_1' + h_2')(a_r) > (h_1' + h_2')(b_r) \implies h_2'(a_r) - h_2'(b_r) > h_1'(b_r) - h_1'(a_r) > D_{r-1}.
    \]
    Similarly, using the edge $(d_r,c_r) \in E_3$, we get 
    \[
    h_1'(c_r) - h_1'(d_r) > D_{r-1}.
    \]
    Finally, using the edge $(a_r,d_r) \in E_2$ and the inequalities above, we have
    \begin{align*}
        h_2'(c_r) - h_2'(b_r) &= (h_2'(c_r) - h_2'(d_r)) + (h_2'(d_r) - h_2'(a_r)) + (h_2'(a_r) - h_2'(b_r)) \\
        &\ge D_{r-1} + 1 + D_{r-1} \\
        &> 2D_{r-1}.
    \end{align*}

    For the case $r$ even, we can repeat the argument above but with $h_1'$ and $h_2'$ swapped, since the edges are the same except that the edges in $E_1$ are now in $E_2$, and the edges in $E_2$ are now in $E_1$. The only asymmetry is that the initial edge $(b_0,c_0)$ is only in $E_1$, but our argument above does not use that edge. 
\end{proof}

To complete the proof of \cref{thm:three_dag_lb}, we will show
\[\max\{h_1'(c_t) - h_1'(b_t), h_2'(c_t) - h_2'(b_t)\} \ge 2^t,
\]
which is enough because $t = \Omega(n + m)$ since each $G_i$ is clearly sparse. 

From the initial edge $(b_0,c_0) \in E_1$, we have $h_1'(c_0) - h_1'(b_0) \ge 1$. Then, we can repeatedly apply \cref{lem:doubling} to get $h_2'(c_r) - h_2'(b_r) \ge 2^r$ for all odd $r \ge 1$, and $h_1'(c_r) - h_1'(b_r) \ge 2^r$ for all even $r \ge 0$.

Therefore, either $h_1'(c_t) - h_1'(b_t) \ge 2^t$ or $h_2'(c_t) - h_2'(b_t) \ge 2^t$, depending on the parity of $t$. This finishes the proof of \cref{thm:three_dag_lb}.

\section{Approximate Version}\label{sec:approx}

The previous sections prove \cref{thm:main}, an exponential lower bound for exact shortest-paths preservers, by showing that 3-topological order preservers reduce to shortest-paths preservers (\cref{thm:reduction} and giving an exponential lower bound for 3-topological order preservers (\cref{thm:three_dag_lb}). We now turn our attention to  $\alpha$-approximate shortest-paths preservers, and show that the exponential lower bound extends even when arbitrary stretch is allowed.

\approx*

We first outline the intuition and how the $\alpha$-stretch version compares to the exact version.
Currently, the lower bound 3-layered DAG $G^*$ from \cref{thm:main} 
relies on a collection of shortest paths $\mathcal P$ (the shortest paths in \cref{lem:sp_structure}), such that any reweighting that preserves the shortest paths in $\mathcal P$ has exponential maximum weight. One type of shortest path in $\mathcal P$ is of the form $P := u \to z_1 \to v_e$, where $v_e$ was created from $e = (u,v) \in E_1$ (see \Cref{fig:three-layered-partition}). Note that $w(P) \ge 2h_1(u) + 2$ after adding $1 + \max(2h_1(u),2h_3(u))$ to all edges incident to $u$ to make all edge weights positive in $G^*$, while the not-shortest path $P' := u \to z_2 \to v_e$ from $u$ to $v_e$ has weight $w(P') = w(P) + 1$. 

The issue with using this construction directly for the $\alpha$-stretch version is that when $h_1(u)$ is large (in our definition of $h_i$ in \cref{lem:h_i}, $h_1(u)$ gets exponentially large), we have $w(P') = w(P) + 1 < \alpha \cdot w(P)$ even for small $\alpha = 1 + \varepsilon$, so that although $P$ must remain shortest in exact shortest-paths preservers, the shortest path from $u$ to $v_e$ in an $\alpha$-stretch shortest-paths preserver $H$ could be either $P$ or $P'$. Without these shortest paths, we would lose the edge inequalities that gives exponential maximum weight in $H$.

To address this problem, our goal is to keep the same graph structure and re-define the edge weights so that the paths in $\mathcal P$ are not just shortest, but also the \emph{only} $\alpha$-approximate shortest path between their endpoints. This way, all paths in $\mathcal P$ must also be shortest in any $\alpha$-stretch gap-1 shortest-paths preserver $H$, so that $H$ has exponential maximum weight.

\subsection{Construction}

Fix $\alpha > 1$. As mentioned above, we re-use the same lower bound graph for the exact version, but with modified edge weights. Let $(G_1,G_2,G_3)$ where $G_i = (V,E_i)$ be the exponential lower bound additive triple from \cref{thm:three_dag_lb}, 
and let $G^* = (V^* = V_1^* \cup V_2^* \cup V_3^*, E^*, w)$ be the 3-layered DAG created via \cref{thm:reduction} 
using $(G_1,G_2,G_3)$. 

We define the lower bound graph $G = (V^*, E^*, w_\alpha)$ on the same vertex and edge sets as $G^*$, and assign new edge weights. We will assign some exponent $p(e)$ for each edge $e \in E^*$, and and set the new weights to be 
\[
w_\alpha(e) := (3\alpha)^{p(e)}.
\]
The intuition is that by choosing $p$ such that for each shortest path $P \in \mathcal P$ with $P = a \to b \to c$ and alternate path $P' = a \to b' \to c$ between the same endpoints (there is only one alternate path since each $c \in V_3^*$ has $\mathrm{indeg}(c) = 2$), the edges on $P'$ will have larger $p$ value in a way that gives the desired inequality $\alpha \cdot w_\alpha(P) < w_\alpha(P')$. The precise condition on what it means for edges on $P'$ to have larger $p$ value will be defined in \cref{subsec:an_approx}.

We define the exponents $p: E^* \to \mathbb Z_{\ge 0}$ below, where \Cref{fig:three-layered-partition-weighted} depicts a part of the new weights assignment, along with some paths relevant for the analysis.

\begin{itemize}
    \item 
    We first define $p$ for all edges between layers $V_1^*$ and $V_2^*$. Recall that the first layer $V_1^*$ contains one vertex for each vertex in $V$, which was defined as
    \[V = \{b_0,c_0\} \cup \bigcup_{r=1}^t \{a_r,b_r,c_r,d_r\}.
    \]
    For each $u \in V_1^*$, we assign a triple $p(u) := (p(u,z_1), p(u,z_2), p(u,z_3)) \in \mathbb Z_{\ge 0}^3$ for the exponent of all edges incident to $u$. 
    For the initial vertices we assign $p(b_0) := (0,0,0)$, $p(c_0) := (1,0,0)$. For each odd $r > 0$, we define 
    \[
    p(a_r) := 2\cdot (5r,5r+1,5r+3), \qquad p(b_r) := 2\cdot (5r+1, 5r, 5r+4),
    \]
    \[ p(c_r) := 2\cdot (5r, 5r+1, 5r), \qquad p(d_r) := 2\cdot (5r+1, 5r, 5r+2).
    \]
    For each even $r > 0$, we define
    \[
    p(a_r) := 2\cdot (5r+2, 5r, 5r+1), \qquad p(b_r) := 2\cdot (5r, 5r+1, 5r),
    \]
    \[
    p(c_r) := 2\cdot (5r+4, 5r, 5r+1), \qquad p(d_r) := 2\cdot (5r+3, 5r+1, 5r).
    \]
    The weight assignments might look quite arbitrary at first, but there is some symmetry to observe. We ignore the multiplicative factor of 2 for now since every vertex has that. In the $r$ odd case, $a_r$ and $c_r$ have first two indices $5r,5r+1$, and for  $b_r,d_r$ it's the same but in the opposite order. In the $r$ even case, the same holds but for the last two indices instead. The remaining indices (the third index for the $r$ odd case and the first index for the $r$ even case) have additive $0,2,3,4$ in some order. 
    
    \item 
    We now define $p$ for all edges between layers $V_2^*$ and $V_3^*$. Recall that $V_3^* = V_{3,1}^* \cup V_{3,2}^* \cup V_{3,3}^*$, where $V_{3,i}^* = \{v_e : e \in E_i\}$. 
    
    \begin{itemize}
        \item 
        For each $v_e \in V_{3,1}^*$ where $e = (u,v) \in E_1$, we define $p$ on the edges $(z_1,v_e), (z_2,v_e)$ incident to $v_e$ by
        \[
        (p(z_{1},v_e), p(z_{2},v_e)) := \begin{cases} (p(u, z_{2}) - 1, 0) & \text{if } p(u,z_{2}) \ge p(u,z_{1}) \\ (0, p(u,z_{1})-1) & \text{else} \end{cases}
        \]

        \item 
        For each $v_e \in V_{3,2}^*$ where $e = (u,v) \in E_2$, we define $p$ on the edges $(z_2,v_e), (z_3,v_e)$ incident to $v_e$ by
        \[
        (p(z_{2},v_e), p(z_{3},v_e)) := \begin{cases} (p(u, z_{3}) - 1, 0) & \text{if } p(u,z_{3}) \ge p(u,z_{2}) \\ (0, p(u,z_{2})-1) & \text{else} \end{cases}
        \]

        \item 
        For each $v_e \in V_{3,3}^*$ where $e = (u,v) \in E_3$, we define $p$ on the edges $(z_1,v_e), (z_3,v_e)$ incident to $v_e$ by
        \[
        (p(z_{1},v_e), p(z_{3},v_e)) := \begin{cases} (p(u, z_{3}) - 1, 0) & \text{if } p(u,z_{3}) \ge p(u,z_{1}) \\ (0, p(u,z_{1})-1) & \text{else} \end{cases}
        \]
    \end{itemize}

    This finishes the definition of $p : E^* \to \mathbb Z_{\ge 0}$.
    
\end{itemize}

Recall that for all $e \in E^*$, the new weight of $e$ is set to be
\[
w_\alpha(e) := (3\alpha)^{p(e)}.
\] 

\begin{figure}[htbp]
\centering

\begin{tikzpicture}[
vertex/.style={circle, draw, inner sep=1.2pt, minimum size=20pt},
blackedge/.style={->, >=Stealth, thick, draw=black},
rededge/.style={->, >=Stealth, thick, draw=red},
setbox/.style={draw, rounded corners, thick, inner sep=8pt},
layerlabel/.style={font=\large},
partlabel/.style={font=\normalsize},
pad/.style={inner sep=0pt, minimum size=0pt},
edgelabel/.style={fill=white, inner sep=1pt, font=\scriptsize}
]

\def\xone{-5}
\def\xtwo{0}
\def\xthree{5}

\node (V1topdots) at (\xone,1.8) {$\vdots$};
\node[vertex] (u) at (\xone,0.9) {$u$};
\node (V1middots) at (\xone,0.0) {$\vdots$};
\node[vertex] (v) at (\xone,-0.9) {$v$};
\node (V1botdots) at (\xone,-1.8) {$\vdots$};

\node[setbox, fit=(V1topdots)(u)(V1middots)(v)(V1botdots)] (V1box) {};
\node[layerlabel, above=4pt of V1box] {$V_1^* = V$};

\node[vertex] (z1) at (\xtwo,1.2) {$z_1$};
\node[vertex] (z2) at (\xtwo,0.0) {$z_2$};
\node[vertex] (z3) at (\xtwo,-1.2) {$z_3$};

\node[setbox, fit=(z1)(z2)(z3)] (V2box) {};
\node[layerlabel, above=4pt of V2box] {$V_2^*$};

\node[pad]    (V31top) at (\xthree,0.95) {};
\node[vertex] (ve)     at (\xthree,0.55) {$v_e$};
\node         (V3dots1) at (\xthree,-0.25) {$\vdots$};
\node[pad]    (V31bot) at (\xthree,-0.65) {};
\node[setbox, fit=(V31top)(ve)(V3dots1)(V31bot)] (V3box1) {};
\node[partlabel, right=8pt of V3box1] {$V_{3,1}^*$};

\node at (\xthree,-1.65) {$\vdots$};

\node[layerlabel] at (\xthree,1.8) {$V_3^*$};

\draw[rededge]
(u) -- node[edgelabel, pos=0.4, above=2pt] {$(3\alpha)^{p(u,z_1)}$} (z1);
\draw[blackedge]
(u) -- node[edgelabel, pos=0.4, above=0pt] {$(3\alpha)^{p(u,z_2)}$} (z2);

\draw[blackedge]
(v) -- node[edgelabel, pos=0.4,below=0pt] {$(3\alpha)^{p(v,z_1)}$} (z1);
\draw[rededge]
(v) -- node[edgelabel, pos=0.4, below=3pt] {$(3\alpha)^{p(v,z_2)}$} (z2);

\draw[rededge]
(z1) -- node[edgelabel, pos=0.5, above=3pt] {$(3\alpha)^{p(u,z_2)-1}$} (ve);
\draw[rededge]
(z2) -- node[edgelabel, pos=0.5, below=3pt] {$1$} (ve);

\end{tikzpicture}

\caption{The shortest paths in $\mathcal P$ (highlighted in red) incident to some $v_e \in V_{3,1}^*$ that are the only $\alpha$-approximate shortest path, under the labeled new edge weights $w_\alpha$. If $p(u,z_2) \ge p(v,z_1)$, the weight of edges incident to $v_e$ are labeled as above. Otherwise, we would have $w_\alpha(z_1,v_e) = 1$ and $w_\alpha(z_2,v_e) = (3\alpha)^{2p(v,z_1)-1}$. For $v_e \in V^*_{3,2}, V^*_{3,3}$, the weights are analogous. }
\label{fig:three-layered-partition-weighted}

\end{figure}

\subsection{Analysis}\label{subsec:an_approx}

\paragraph{Reminder of analysis from exact version.}
Recall that in the previous lower bound for the exact version, we proved in \cref{lem:differences} that for each $u \in V$, the edge weight $w$ satisfies
\[
w(u,z_1) - w(u,z_2) = 2h_1(u).
\]
Consequently, for each $e = (u,v) \in E_1$, we have $h_1(v) \ge h_1(u) + 1$ since $h_1$ is a topological ordering, so we have
\begin{equation}
\label{eq:2_slack}
    w(v,z_1) - w(v,z_2) \ge w(u,z_1) - w(v,z_2) + 2.
\end{equation}
By having a $+2$ slack, this allowed us to define edge weights incident to $v_e$ so that the difference $w(z_2, v_e) - w(z_1,v_e)$ satisfies
\[
w(u,z_1) - w(v,z_2) + 1 \le w(z_2, v_e) - w(z_1,v_e) \le w(v,z_1) - w(v,z_2) -1
\]
which shows that the shortest paths $u \to z_1 \to v_e, v \to z_2 \to v_e \in \mathcal P$ are indeed shortest by expanding the inequalities.

\paragraph{New analysis.}
Here, in the $\alpha$-stretch version, we adopt a similar strategy. We first prove that our weight assignment $w_\alpha$ satisfies an analogous condition to \cref{eq:2_slack}.

\begin{lemma}\label{lem:max}
    The following inequalities are true:
    \begin{itemize}
        \item 
        For each $(u,v) \in E_1$, we have
        \[
        \max\{ p(u,z_{2}), p(v,z_{1})\} \ge \max\{p(u,z_{1}), p(v,z_{2})\} + 2.
        \]

        \item 
        For each $(u,v) \in E_2$, we have
        \[
        \max\{ p(u,z_{3}), p(v,z_{2})\} \ge \max\{p(u,z_{2}), p(v,z_{3})\} + 2.
        \]

        \item 
        For each $(u,v) \in E_3$, we have
        \[
        \max\{ p(u,z_{3}), p(v,z_{1})\} \ge \max\{p(u,z_{1}), p(v,z_{3})\} + 2.
        \]
    \end{itemize}
\end{lemma}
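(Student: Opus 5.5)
The plan is a direct, finite case check. By construction, the exponent triples $p(u)$ depend only on the level $r$ of $u$, on the parity of $r$, and on which of $a_r,b_r,c_r,d_r$ the vertex $u$ is. The edges of each $E_i$ also come in a bounded number of types. So I would list every edge type of $E_1,E_2,E_3$ from \cref{subsec:3dag_construction}, substitute the definitions of $p$, and compare the two maxima.

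The edge types are as follows.
\begin{itemize}
\item For $E_1$: the base edge $(b_0,c_0)$; for odd $r$, the cross-level edges $(a_r,b_{r-1})$, $(c_r,b_{r-1})$, $(c_{r-1},b_r)$, $(c_{r-1},d_r)$; and for even $r>0$, the intra-level edge $(a_r,d_r)$.
\item For $E_2$: the same list with the parities swapped, and no base edge.
\item For $E_3$: the edges $(b_r,a_r)$ and $(d_r,c_r)$ for every $r$, in both parities.
\end{itemize}
The cross-level edges with $r=1$ involve $b_0,c_0$. Since $p(b_0)$ and $p(c_0)$ are not given by the level-$r$ formulas, I would treat them as separate small cases.

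The main observation that organizes the computation is that the level-$r$ coordinates lie in $2\cdot[5r,5r+4]$, while the level-$(r-1)$ coordinates lie in $2\cdot[5r-5,5r-1]$. So the level-$r$ values strictly dominate, by at least $2$.

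Consider a cross-level edge between levels $r-1$ and $r$. In each of the two maxima exactly one term comes from the level-$r$ endpoint, so both maxima are attained at that endpoint. The inequality then reduces to a comparison between two coordinates of the level-$r$ vertex, together with a check that the level-$(r-1)$ term does not exceed the right-hand side minus $2$.

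For example, take $(a_r,b_{r-1})\in E_1$ with $r$ odd. The left side is $p(a_r,z_2)=10r+2$. The right side is $\max\{p(a_r,z_1),p(b_{r-1},z_2)\}+2=\max\{10r,\,10r-8\}+2=10r+2$.

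Similarly, take $(c_{r-1},b_r)\in E_1$ with $r$ odd. Both sides equal $10r+2$, because the left side is $p(b_r,z_1)=10r+2$ and the right side is $p(b_r,z_2)+2=10r+2$.

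The design visible in the definition of $p$ is exactly this: the first two coordinates (odd $r$) or the last two coordinates (even $r$) are the pair $5r,5r+1$ in one order or the other. Doubling these gives the required gap of $2$.

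The intra-level edges are $(a_r,d_r)$ in $E_2$ (odd $r$) or in $E_1$ (even $r$), together with the $E_3$ edges $(b_r,a_r)$ and $(d_r,c_r)$. For these, both endpoints are at the same level. Here the "offset" coordinates carrying $0,2,3,4$ are what make the inequality hold. For example, for $(b_r,a_r)\in E_3$ with $r$ odd, the left side is $p(b_r,z_3)=10r+8$ and the right side is $\max\{10r+2,\,10r+6\}+2=10r+8$.

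I would tabulate all of these cases, roughly sixteen in total, for odd and even $r$, and record the equality or slack in each.

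I expect the hardest step to be the boundary cases involving level $0$. These are the base edge $(b_0,c_0)$ and the edges at $r=1$ that touch $b_0$ or $c_0$, namely $(a_1,b_0)$, $(c_1,b_0)$, $(c_0,b_1)$ and $(c_0,d_1)$. The generic domination argument does not apply to them, since $p(b_0)=(0,0,0)$ and $p(c_0)=(1,0,0)$ are small, ad hoc values. For the $r=1$ edges the level-$1$ coordinates, which are at least $10$, dominate, so they should go through as in the generic case. The base edge, however, must be checked by hand directly against the stated values of $p(b_0)$ and $p(c_0)$, and it is the place where the slack is thinnest. I would do that check first and with care, since it is the only case not covered by the level-domination pattern.
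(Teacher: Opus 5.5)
Your strategy is the same exhaustive check the paper performs. The paper divides every triple by $2$ and verifies a gap of $1$. Your generic computations are correct, and so are the $r=1$ edges touching $b_0,c_0$, where the level-$1$ coordinates (at least $10$) dominate.

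The gap is exactly the step you postponed, and it does not go through. Check the base edge against the stated values $p(b_0)=(0,0,0)$ and $p(c_0)=(1,0,0)$:
\[
\max\{p(b_0,z_2),p(c_0,z_1)\}=\max\{0,1\}=1<2=\max\{p(b_0,z_1),p(c_0,z_2)\}+2 .
\]
So, with the definitions read literally, the first bullet of the lemma is false for $(b_0,c_0)$. The slack there is not thin; it is negative.

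The idea you are missing is that $p(c_0)$ is the one triple written without the factor $2$ that all the others carry, and this is a slip in the definition. The paper's proof checks the base edge (``$1>0$'') on the halved triples. That is, it implicitly uses $\tfrac12 p(c_0)=(1,0,0)$, so $p(c_0)=(2,0,0)$. With that value the base edge gives $\max\{0,2\}=2\ge 0+2$. The edges $(c_0,b_1)$ and $(c_0,d_1)$ still hold, since $p(c_0,z_1)=2<10$.

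To complete your argument, you must state this correction explicitly rather than ``check carefully'' against the stated values. The correction also matters later: \cref{lem:unique} uses the full $+2$ at this edge to separate the two $b_0$-to-$v_e$ paths.
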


\begin{proof}
    For each $i \in \{1,2,3\}$, we need to check the above for all edges in $E_i$ (where $E_i$ was defined in \cref{subsec:3dag_construction}). Note that all $p(u)$ are multiplied by a factor of 2, so we look at the triples $\frac 12p(u)$ and prove the same inequalities but with $+1$ instead of $+2$ on the right hand side.
    \begin{itemize}
        \item 
        $E_1$: For the initial edge $(b_0,c_0)$, we have 
        \[
        \max\{p(b_0,z_2), p(c_0,z_1)\} = 1 > 0 = \max\{p(b_0,z_1), p(c_0,z_2)\}.
        \]
        For each odd $r$, we have edges $\{(a_{r},b_{r-1}),(c_{r},b_{r-1}),(c_{r-1},b_r),(c_{r-1},d_r)\}$. We check that
        \[
        \max\{p(a_r,z_2), p(b_{r-1},z_1)\} = p(a_r,z_2) = 5r+1 > 5r = p(a_r,z_1) = \max\{p(a_r,z_1), p(b_{r-1},z_2)\} 
        \]
        \[
        \max\{p(c_r,z_2), p(b_{r-1},z_1)\} = p(c_r,z_2) = 5r+1 > 5r = p(c_r,z_1) = \max\{p(c_r,z_1), p(b_{r-1},z_2)\} 
        \]
        \[
        \max\{p(c_{r-1},z_2), p(b_r,z_1)\} = p(b_r,z_1) = 5r+1 > 5r = p(b_r,z_2) = \max\{p(c_{r-1},z_1), p(b_r,z_2)\} 
        \]
        \[
        \max\{p(c_{r-1},z_2), p(d_r,z_1)\} = p(d_r,z_1) = 5r+1 > 5r = p(d_r,z_2) = \max\{p(c_{r-1},z_1), p(d_r,z_2)\}.
        \]
        For each even $r$, we have the edge $(a_r,d_r)$. We check that
        \[
        \max\{p(a_r,z_2), p(d_r,z_1)\} = 5r+3 > 5r+2 = \max\{p(a_r,z_1), p(d_r,z_2)\}.
        \]

        \item 
        $E_2$: For each odd $r$, we have the edge $(a_r,d_r)$. We check that
        \[
        \max\{p(a_r,z_3), p(d_r,z_2)\} = 5r+3 > 5r+2 = \max\{p(a_r,z_2), p(d_r,z_3)\}.
        \]
        For each even $r$, we have edges $\{(a_{r},b_{r-1}),(c_{r},b_{r-1}),(c_{r-1},b_r),(c_{r-1},d_r)\}$. We check that
        \[
        \max\{p(a_r,z_3), p(b_{r-1},z_2)\} = p(a_r,z_3) = 5r+1 > 5r = p(a_r,z_2) = \max\{p(a_r,z_2), p(b_{r-1},z_3)\}
        \]
        \[
        \max\{p(c_r,z_3), p(b_{r-1},z_2)\} = p(c_r,z_3) = 5r+1 > 5r = p(c_r,z_2) = \max\{p(c_r,z_2), p(b_{r-1},z_3)\}
        \]
        \[
        \max\{p(c_{r-1},z_3), p(b_r,z_2)\} = p(b_r,z_2) = 5r+1 > 5r = p(b_r,z_3) = \max\{p(c_{r-1},z_2), p(b_r,z_3)\}
        \]
        \[
        \max\{p(c_{r-1},z_3), p(d_r,z_2)\} = p(d_r,z_2) = 5r+1 > 5r = p_3(d_r) = \max\{p(c_{r-1},z_2), p(d_r,z_3)\}.
        \]
        
        \item 
        $E_3$: For each $r$, we have the edges $\{(b_r, a_r), (d_r, c_r)\}$. For odd $r$, we check that
        \[
        \max\{p(b_r,z_3), p(a_r,z_1)\} = 5r+4 > 5r+3 = \max\{p(b_r,z_1), p(a_r,z_3)\}
        \]
        \[
        \max\{p(d_r,z_3), p(c_r,z_1)\} = 5r+2 > 5r+1 = \max\{p(d_r,z_1), p(c_r,z_3)\}.
        \]
        For even $r$, we check that
        \[
        \max\{p(b_r,z_3), p(a_r,z_1)\} = 5r+2 > 5r+1 = \max\{p(b_r,z_1), p(a_r,z_3)\}
        \]
        \[
        \max\{p(d_r,z_3), p(c_r,z_1)\} = 5r+4 > 5r+3 = \max\{p(d_r,z_1), p(c_r,z_3)\}.
        \]
        \end{itemize}
\end{proof}

Finally, we check that the shortest paths in $\mathcal P$ (the ones in \cref{lem:sp_structure}) are the only $\alpha$-approximate shortest paths between their endpoints in $G$. 

\begin{lemma}
\label{lem:unique}
    We have the following unique $\alpha$-stretch shortest paths under the new weighting $w_\alpha$:
    \begin{itemize}
        \item 
        For $e = (u,v) \in E_1$, the unique $\alpha$-stretch shortest path from $u$ to $v_e$ is $u \to z_{1} \to v_e$, and the unique $\alpha$-stretch shortest path from $v$ to $v_e$ is $v \to z_2 \to v_e$.

        \item 
        For $e = (u,v) \in E_2$, the unique $\alpha$-stretch shortest path from $u$ to $v_e$ is $u \to z_{2} \to v_e$, and the unique $\alpha$-stretch shortest path from $v$ to $v_e$ is $v \to z_3 \to v_e$.

        \item 
        For $e = (u,v) \in E_3$, the unique $\alpha$-stretch shortest path from $u$ to $v_e$ is $u \to z_{1} \to v_e$, and the unique $\alpha$-stretch shortest path from $v$ to $v_e$ is $v \to z_3 \to v_e$.
    \end{itemize}
\end{lemma}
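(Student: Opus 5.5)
The plan is a direct computation for each of the three bullets, using \cref{lem:max} as the only input. Each $v_e$ has in-degree exactly $2$, so there are exactly two paths from $u$ (or from $v$) to $v_e$. It therefore suffices to show that the alternate path has $w_\alpha$-weight strictly larger than $\alpha$ times the weight of the claimed path. The real distance is then attained by the claimed path, and the alternate path is not an $\alpha$-approximate shortest path. I treat $E_1$ in detail; $E_2$ and $E_3$ follow by the same substitution of indices used in \cref{lem:sp_structure}, namely $(z_1,z_2)\mapsto(z_2,z_3)$ and $(z_1,z_2)\mapsto(z_1,z_3)$ respectively.

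Fix $e=(u,v)\in E_1$ and write $M:=\max\{p(u,z_2),p(v,z_1)\}$ and $m:=\max\{p(u,z_1),p(v,z_2)\}$. \Cref{lem:max} gives $M\ge m+2$, and in particular $M\ge 2$, so every exponent used below is nonnegative. The weights on the two edges into $v_e$ are one edge of weight $(3\alpha)^{M-1}$ and one of weight $1$. This is how I read the case split in the definition: the larger exponent is $M-1$, placed on $(z_1,v_e)$ when $M=p(u,z_2)$ and on $(z_2,v_e)$ when $M=p(v,z_1)$, matching \Cref{fig:three-layered-partition-weighted}. Pinning down this reading is the one step where care is needed; the rest is arithmetic.

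\emph{Case $M=p(u,z_2)$.} Here $w_\alpha(z_1,v_e)=(3\alpha)^{M-1}$ and $w_\alpha(z_2,v_e)=1$.
\begin{itemize}
\item From $u$: the claimed path has weight $(3\alpha)^{p(u,z_1)}+(3\alpha)^{M-1}\le 2(3\alpha)^{M-1}$, since $p(u,z_1)\le m\le M-2$. The alternate path $u\to z_2\to v_e$ has weight at least $(3\alpha)^{M}=3\alpha(3\alpha)^{M-1}$, which exceeds $\alpha\cdot 2(3\alpha)^{M-1}$.
\item From $v$: the claimed path has weight $(3\alpha)^{p(v,z_2)}+1\le 2(3\alpha)^{M-2}$, since $p(v,z_2)\le M-2$ and $M-2\ge 0$. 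The alternate path $v\to z_1\to v_e$ has weight at least $w_\alpha(z_1,v_e)=(3\alpha)^{M-1}=3\alpha(3\alpha)^{M-2}$, which exceeds $2\alpha(3\alpha)^{M-2}$.
\end{itemize}

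\emph{Case $M=p(v,z_1)>p(u,z_2)$.} This is symmetric: now $w_\alpha(z_2,v_e)=(3\alpha)^{M-1}$ and $w_\alpha(z_1,v_e)=1$.
\begin{itemize}
\item From $v$: the alternate path $v\to z_1\to v_e$ has weight at least $(3\alpha)^M$, while the claimed path $v\to z_2\to v_e$ has weight at most $2(3\alpha)^{M-1}$.
\item From $u$: the alternate path $u\to z_2\to v_e$ has weight at least $(3\alpha)^{M-1}$, while the claimed path $u\to z_1\to v_e$ has weight at most $(3\alpha)^{M-2}+1\le 2(3\alpha)^{M-2}$.
\end{itemize}
In both cases the factor $3\alpha>2\alpha$ gives strict separation.

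This uses only the $+2$ gap in \cref{lem:max}. One unit of the gap absorbs the $-1$ in the exponent of the $V_2^*$–$V_3^*$ edge, and the other absorbs the constant term $1$. The factor $3$ in the base $3\alpha$ beats the factor $2$ that arises from bounding a sum of two powers by twice the larger one.
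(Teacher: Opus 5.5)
Your proof is correct and follows the paper's argument essentially line for line. It uses the same two-case split on whether $p(u,z_2)\ge p(v,z_1)$, the same bound of the claimed path by twice its larger term, and the same use of $3\alpha>2\alpha$; the only cosmetic difference is that you work with the doubled exponents and the $+2$ gap of \cref{lem:max} directly, whereas the paper halves the triples and writes $(3\alpha)^{2p(\cdot)}$. Your reading of the case split for the $V_2^*$--$V_3^*$ edges (compare $p(u,z_2)$ with $p(v,z_1)$, and use $p(v,z_1)-1$ in the second branch) is exactly what the paper's proof and the caption of \Cref{fig:three-layered-partition-weighted} use; the displayed definition, which compares against $p(u,z_1)$, is a typo.
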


Before the formal proof, we provide an outline for the case $(u,v) \in E_1$ and $p(u,z_2) \ge p(v,z_1)$ (other cases are similar). See \Cref{fig:three-layered-partition-weighted} for the relevant paths. We look at the paths between $u$ and $v_e$. By \cref{lem:max}, we have that $p(u,z_2) \ge p(u,z_1) + 2$. Since $p(z_1,v_e) = p(u,z_2) - 1$, we know that $p(u,z_2)$ is at least 1 bigger than both $p(u,z_1)$ and $p(z_1,v_e)$. Intuitively, this is good because an edge on the not-shortest path has the largest exponent. Then, by the choice of the base $3\alpha$, we would have the following inequalities (details are verified in the proof of \cref{lem:unique} below):
\[
w_\alpha(u \to z_2 \to v_e) > (3\alpha)^{p(u,z_2)} > \alpha\cdot ((3\alpha)^{p(u,z_1)} + (3\alpha)^{p(z_1,v_e)}) = \alpha \cdot w_\alpha(u \to z_1 \to v_e)
\]
so the claim is true for $u \to z_1 \to v_e \in \mathcal P$. Now we look at the paths between $v$ and $v_e$. Again by \cref{lem:max}, we have that $p(z_1,v_e) = p(u,z_2)-1 \ge p(v,z_2)+1$. Then we would have the following inequalities (again, details are verified in the proof of \cref{lem:unique} below):
\[
w_\alpha(v \to z_1 \to v_e) > (3\alpha)^{p(z_1,v_e)} > \alpha\cdot ((3\alpha)^{p(v,z_2)} + 1) = \alpha \cdot w_\alpha(v \to z_2 \to v_e)
\]
proving that the claim is true for $v \to z_2 \to v_e \in \mathcal P$ as well.

\begin{proof}[Proof of \cref{lem:unique}]
    We look at the case $e = (u,v) \in E_1$. It is enough to show that 
    \[
    \alpha \cdot w_\alpha(u \to z_1 \to v_e) < w_\alpha(u \to z_2 \to v_e)
    \]
    and
    \[
    \alpha \cdot w_\alpha(v \to z_2 \to v_e) < w_\alpha(v \to z_1 \to v_e).
    \]

    We do casework on whether $p(u,z_{2}) \ge p(v,z_{1})$: 
    \begin{itemize}
        \item 
        Suppose $p(u,z_{2}) \ge p(v,z_{1})$. So \cref{lem:max} tells us that
        \[
        p(u,z_{2}) \ge  \max\{p(u,z_{1}), p(v,z_{2})\} + 1.
        \]
        Then we have
        \begin{align*}
            \alpha \cdot w_\alpha(u \to z_{1} \to v_e) &= \alpha \cdot \left((3\alpha)^{2p(u,z_{1})} + (3\alpha)^{2p(u,z_{2}) - 1}\right) \\
            &\le \alpha \cdot \left(2 \cdot (3\alpha)^{2p(u,z_{2}) -1}\right) \\
            &< (3\alpha)^{2p(u,z_{2})} \\
            &= w_\alpha(u, z_{2}) \\
            &< w_\alpha(u \to z_{2} \to v_e)
        \end{align*}
        and
        \begin{align*}
            \alpha \cdot w_\alpha(v \to z_{2} \to v_e) &= \alpha \cdot \left((3\alpha)^{2p(v,z_{2})} + 1\right) \\
            &\le \alpha \cdot \left(2 \cdot (3\alpha)^{2p(v,z_{2})}\right) \\
            &< (3\alpha)^{2p(v,z_{2}) + 1} \\
            &\le (3\alpha)^{2p(u,z_{2})-1} \\
            &= w_\alpha(z_{1}, v_e) \\
            &< w_\alpha(v \to z_{1} \to v_e).
        \end{align*}

        \item 
        Suppose $p(u,z_{2}) < p(v,z_{1})$, so we have
        \[
        p(v,z_{1}) \ge \max\{p(u,z_{1}), p(v,z_{2})\} + 1.
        \]
        The argument is almost symmetric, in the sense that the roles of $(z_1,v_e)$ and $(z_2,v_e)$ are flipped, so the inequalities for $u \to z_1 \to v_e$ and $v \to z_2 \to v_e$ are flipped. 
        We have
        \begin{align*}
            \alpha \cdot w_\alpha(u \to z_{1} \to v_e) &= \alpha \cdot \left((3\alpha)^{2p(u,z_{1})} + 1\right) \\
            &\le \alpha \cdot \left(2 \cdot (3\alpha)^{2p(u,z_{1})}\right) \\
            &< (3\alpha)^{2p(u,z_{1})+1} \\
            &\le (3\alpha)^{2p(v,z_{1})-1} \\
            &= w_\alpha(z_{2}, v_e) \\
            &< w_\alpha(u \to z_{2} \to v_e)
        \end{align*}
        and
        \begin{align*}
            \alpha \cdot w_\alpha(v \to z_{2} \to v_e) &= \alpha \cdot \left((3\alpha)^{2p(v,z_{2})} + (3\alpha)^{2p(v,z_{1})-1}\right) \\
            &\le \alpha \cdot \left(2 \cdot (3\alpha)^{2p(v,z_{1})-1}\right) \\
            &< (3\alpha)^{2p(v,z_{1})} \\
            &= w_\alpha(v,z_{1}) \\
            &< w_\alpha(v \to z_{1} \to v_e).
        \end{align*}
    \end{itemize}

    This proves the case $(u,v) \in E_1$. For the case $e = (u,v) \in E_2$, we can repeat the proof above with $z_1,z_2$ replaced by $z_2,z_3$, respectively. Everything works the same since the definition of $p$ on edges $(z_2,v_e),(z_3,v_e)$ and \cref{lem:max} for $e \in E_2$ is the same as $e \in E_1$ except with $z_1,z_2$ replaced by $z_2,z_3$. For the case $e = (u,v) \in E_3$, we can repeat the proof above with $z_2$ replaced by $z_3$. 
\end{proof}

This finishes the proof of \cref{thm:approx}.

\section{Linear Upper Bound for $|V_2| = 2$ Case}\label{sec:ub}

We prove \cref{thm:upper_bound}: 

\upperbound*

Our strategy to prove \Cref{thm:upper_bound} is the follows: Given any 3-layered DAG $G$, we first construct an auxiliary DAG $G^*$. Using the topological order of $G^*$, we define edge weights in $H$ so that all shortest paths are preserved.

Note that without loss of generality,  shortest paths are unique (that is, for each pair of endpoints $(s,t)$ such that $s$ can reach $t$, there is a unique shortest path between them). This is because if shortest paths are not unique, we can make them so using the standard trick of applying a random perturbation to the edge weights. This is sound because in the case of ties shortest-paths preservers only need to preserve one shortest path for each pair of endpoints.

Let $G = (V_1\cup V_2 \cup V_3, E, w)$ be any 3-layered DAG with $|V_2| = 2$. Let $V_2 = \{z_1,z_2\}$. The reduction goes as follows: We construct a graph $G^* = (V^*, E^*)$ such that
\begin{itemize}
    \item 
    $V^* = V_1 \cup V_3$

    \item 
    For each pair $(u,v) \in V_1 \times V_3$, if the shortest path from $u$ to $v$ is $u \to z_1 \to v$, add the directed edge $(u,v)$ to $E^*$. Otherwise, if the shortest path from $u$ to $v$ is $u \to z_2 \to v$, add the directed edge $(v,u)$ to $E^*$.
\end{itemize}

\begin{lemma}
    $G^*$ is acyclic.
\end{lemma}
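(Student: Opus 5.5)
We prove acyclicity by attaching a real potential $\phi$ to every vertex of $G^*$ that strictly increases along each edge of $E^*$. A directed cycle would then force $\phi$ to strictly increase all the way around and return to its starting value, which is impossible.

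For $u \in V_1$, set $\phi(u) := w(u,z_1) - w(u,z_2)$. For $v \in V_3$, set $\phi(v) := w(z_2,v) - w(z_1,v)$. Here we treat a missing edge as having weight $+\infty$, and we take $\phi$ to be extended-real-valued. Consider a pair $(u,v) \in V_1 \times V_3$ for which both two-hop paths $u \to z_1 \to v$ and $u \to z_2 \to v$ exist. The shortest path is $u \to z_1 \to v$ if and only if
\[
w(u,z_1) + w(z_1,v) < w(u,z_2) + w(z_2,v),
\]
which is equivalent to $\phi(u) < \phi(v)$. In that case $E^*$ contains the edge $(u,v)$, and $\phi$ strictly increases along it. Otherwise the shortest path is $u \to z_2 \to v$. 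Since we assume shortest paths are unique, the inequality is then strict in the other direction, so $\phi(v) < \phi(u)$. In this case $E^*$ contains $(v,u)$, and again $\phi$ strictly increases along the edge.

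The step needing the most care is the handling of missing edges and pairs with no path between them. I will read the construction of $G^*$ as adding an edge only for pairs $(u,v)$ for which $v$ is reachable from $u$. If only one of the two two-hop paths exists, it is automatically the shortest. In the infinite-weight formalism this case still gives the correct strict inequality: for instance, a missing $(u,z_2)$ gives $\phi(u) = -\infty$, and a missing $(z_1,v)$ gives $\phi(v) = -\infty$. The danger is a comparison of the form $\infty - \infty$, which would arise if, say, $u$ lacks the edge to $z_2$ and $v$ lacks the edge from $z_1$. But in that situation neither two-hop path exists, so no edge of $E^*$ is added for that pair and nothing needs checking. I will spell this case analysis out.

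If the extended values feel awkward, there is a cleaner alternative: argue directly that every edge of $E^*$ goes between the two sides $V_1$ and $V_3$, and compare the comparison signs along a putative alternating cycle. However, the potential-function argument already does exactly this. Once every edge of $E^*$ is shown to satisfy $\phi(\text{tail}) < \phi(\text{head})$, acyclicity follows immediately. As a bonus, $\phi$ supplies the topological order that the later weight construction for \cref{thm:upper_bound} will use.
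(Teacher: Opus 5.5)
Your proof is correct and is the paper's argument in potential form. The paper assumes an alternating cycle $u_1 v_1 \cdots u_k v_k u_1$ in $G^*$, writes the strict shortest-path inequality for each of its $2k$ edges, and sums them to get a contradiction; that sum is exactly the telescoping of your inequalities $\phi(\text{tail}) < \phi(\text{head})$ around the cycle.

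Your $\phi$ is the same function the paper uses as $h_1^*$ in the appendix for the analogous acyclicity claim. Your handling of missing edges is more explicit than the paper's, which needs none because every $u$ on a cycle has both $(u,z_1),(u,z_2)$ and every $v$ on a cycle has both $(z_1,v),(z_2,v)$. The only loose end is that $\phi$ is undefined for a vertex with no edges to or from $V_2$, but such a vertex is isolated in $G^*$.
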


\begin{proof}
    Suppose for contradiction that $G^*$ has a cycle. Note that by definition $G^*$ is bipartite with edges between $V_1$ and $V_3$, so there exists $u_1,\ldots,u_k \in V_1$ and $v_1,\ldots,v_k \in V_3$ such that $(u_1,v_1,u_2,v_2,\ldots,u_k,v_k,u_1)$ forms a cycle. We will show that the edges in this cycle in $G^*$ imply there is a set of shortest paths in $G$ that are incompatible with each other, which gives a contradiction. 

    For each edge $(u_i,v_i)$ where $i \in [k]$, we know that the shortest path from $u_i$ to $v_i$ in $G$ is $u_i \to z_1 \to v_i$, so
    \[
    w(u_i,z_1) + w(z_1,v_i) < w(u_i,z_2) + w(z_2,v_i).
    \]
    For each edge $(v_i,u_{i+1})$ where $i \in [k]$ and $u_{k+1} := u_1$, we know that the shortest path from $u_{i+1}$ to $v_i$ in $G$ is $u_{i+1} \to z_2 \to v_i$, so
    \[
    w(u_{i+1},z_2) + w(z_2,v_i) < w(u_{i+1},z_1) + w(z_1,v_i).
    \]
    Summing up the inequalities above over all $i \in [k]$, we have
    \[
    \sum_{i=1}^k w(u_i,z_1) + w(z_1,v_i) + w(u_{i+1},z_2) + w(z_2,v_i) < \sum_{i=1}^k w(u_i,z_2) + w(z_2,v_i) + w(u_{i+1},z_1) + w(z_1,v_i).
    \]
    Note that since both summations are over all $i$, we can change the $u_{i+1}$ on both sides to $u_i$ without changing the sum, after which we can see that both sides are equal. This gives a contradiction.  Thus, $G^*$ must be acyclic.
\end{proof}

Since $G^*$ is acyclic, it has some topological ordering $h: V^* \to \{1,2,\ldots,|V^*|\}$. Using this, we are ready to prove the original theorem:

\begin{proof}[Proof of ~\cref{thm:upper_bound}]

    To construct the shortest-paths preserver $H$, we define $w_H$ as follows: 
    \begin{itemize}
        \item 
        For each $u \in V_1$, let $w_H(u,z_1) := 1 + h(u)$ and $w_H(u,z_2) := 1$.

        \item 
        For each $v \in V_3$, let $w_H(z_1,v) := 1$ and $w_H(z_2,v) := 1 + h(v)$.
    \end{itemize}

    It is clear that all edge weights are positive integers that are at most $1 + |V^*| = 1 + |V_1| + |V_3|$, so it remains to check that $w_H$ preserves all the shortest paths.
    
    For each $(u,v) \in V_1 \times V_3$:
    \begin{itemize}
        \item 
        If the shortest path from $u$ to $v$ in $G$ is $u \to z_1 \to v$, we have $(u,v) \in E^*$, so $h(u) < h(v)$. We have
        \[
        w_H(u,z_1) + w_H(z_1,v) = 2 + h(u) < 2 + h(v) = w_H(u,z_2) + w_H(z_2,v),
        \]
        so $u \to z_1 \to v$ is also shortest in $H$.

        \item 
        Similarly, if the shortest path from $u$ to $v$ in $G$ is $u \to z_2 \to v$, we have $(v,u) \in E^*$, so $h(v) < h(u)$. We have
        \[
        w_H(u,z_2) + w_H(z_2,v) = 2 + h(v) < 2 + h(u) = w_H(u,z_1) + w_H(z_1,v),
        \]
        so $u \to z_2 \to v$ is also shortest in $H$.
    \end{itemize}

    So $H$ preserves all shortest paths in $G$.
\end{proof}

\subsection*{Acknowledgments}

We thank Greg Bodwin for suggesting looking at the 3-layered case.
\label{sec:acks}

\paragraph{Use of AI Tools.} We used ChatGPT 5.5 Pro to assist with finding the lower bound constructions in \cref{sec:3DAG} and \cref{sec:approx}. The tool materially affected \cref{sec:3DAG} and \cref{sec:approx}. We did not use AI for the proofs in \cref{sec:reduction}, \cref{sec:ub}, or \cref{sec:other}. More details can be found in \cref{subsec:techniques}. For all content that AI assisted with, the authors verified its correctness and heavily rewrote it.

\bibliographystyle{alpha}
\bibliography{references}

\appendix
\section{Reduction in the Other Direction}\label{sec:other}

We prove~\Cref{thm:other_direction}:

\otherdir*

\subsection{Construction}

Let $V_2 = \{z_1, z_2, z_3\}$. We assume the input graph $G$ is a complete 3-layered DAG, that is, there is an edge $(u,z_i)$ for all $u \in V_1, z_i \in V_2$ and there is an edge $(z_i,v)$ for all $z_i \in V_2, v \in V_3$. 

Note that this assumption is without loss of generality because the case where the input instance is complete is the ``hardest'': if $G$ is not complete, we add all the missing edges and set their weight to $W$ for some sufficiently large $W$ so that none of the added edges lie on shortest paths to obtain $G_{\text{complete}}$. Then for any shortest-paths preserving reweighting of $G_{\text{complete}}$, we obtain a shortest-paths preserving reweighing of $G$ simply by ignoring the weight of the added edges. Then any upper bound on the optimal value (smallest maximum edge weight of a shortest-paths reweighting) of $G_{\text{complete}}$ gives an upper bound for the optimal of $G$, so it is enough to consider $G_{\text{complete}}$. 

We construct $(G_1^*, G_2^*, G_3^*)$ where $G_i^* = (V^*, E_i^*)$ as follows: 

\begin{itemize}
    \item 
    Let $V^* = V_1 \cup V_3$.

    \item 
    For each pair of endpoints $u \in V_1, v \in V_3$:
    \begin{itemize}
        \item 
        If $u \to z_1 \to v$ is the shortest path in $G$, add the edge $(u,v)$ to $E_1^*$ and $E_3^*$.

        \item 
        If $u \to z_2 \to v$ is the shortest path in $G$, add the edge $(v,u)$ to $E_1^*$, and add the edge $(u,v)$ to $E_2^*$.

        \item 
        If $u \to z_3 \to v$ is the shortest path in $G$, add the edge $(v,u)$ to both $E_2^*$ and $E_3^*$.
    \end{itemize}
\end{itemize}

\subsection{Analysis}

It is clear that $|V^*| = |V_1| + |V_3|$ and $|E_1^*| + |E_2^*| + |E_3^*| = 2 \cdot |V_1| \cdot |V_3|$. 

We first show that $(G_1^*, G_2^*, G_3^*)$ is an additive triple of DAGs. Right now, it is not even clear that they are DAGs.

\begin{lemma}
    For each $u \in V^*$, if $u \in V_1$, let
    \[
    h_1^*(u) := w(u,z_1) - w(u,z_2), \quad h_2^*(u) := w(u,z_2) - w(u,z_3), \quad h_3^*(u) = w(u,z_1) - w(u,z_3).
    \]
    If $u \in V_3$, let
    \[
    h_1^*(u) := w(z_2,u) - w(z_1,u), \quad h_2^*(u) := w(z_3,u) - w(z_2,u), \quad h_3^*(u) := w(z_3,u) - w(z_1,u).
    \]
    Then, each $h_i^*$ is a valid topological ordering for $G_i^*$, and $h_1^* + h_2^* = h_3^*$.
\end{lemma}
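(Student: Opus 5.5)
The lemma has two parts. The identity $h_1^*+h_2^*=h_3^*$ is immediate from the definitions. For $u\in V_1$ the $w(u,z_2)$ terms cancel, giving $w(u,z_1)-w(u,z_3)$. For $u\in V_3$ the $w(z_2,u)$ terms cancel, giving $w(z_3,u)-w(z_1,u)$. So the work is in showing that each $h_i^*$ is a topological ordering of $G_i^*$, i.e.\ $h_i^*(x)+1\le h_i^*(y)$ for every edge $(x,y)\in E_i^*$. Once this holds, each $G_i^*$ admits a topological ordering and hence is acyclic, so the triple is automatically an additive triple of DAGs.

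\textbf{Normalizing the gap.} The strict inequalities coming from shortest paths in $G$ only give $h_i^*(x)<h_i^*(y)$, not a gap of $1$. I would fix this up front. Since shortest paths are unique (WLOG, via the perturbation remark used in \cref{sec:ub}), and there are finitely many pairs, there is a positive minimum gap $\delta$ between the unique shortest path and any other path with the same endpoints. Scaling all weights of $G$ by $1/\delta$ preserves every shortest path and makes every such gap at least $1$. So we may assume $w(\pi)+1\le w(\pi')$ for the shortest path $\pi$ and any other path $\pi'$ between the same $u\in V_1$ and $v\in V_3$. I would state this normalization explicitly, since the $h_i^*$ are defined from $w$ and the claim is false without it.

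\textbf{Case check.} Fix $u\in V_1$, $v\in V_3$.
\begin{itemize}
\item If $u\to z_1\to v$ is shortest, then $w(u,z_1)+w(z_1,v)+1\le w(u,z_2)+w(z_2,v)$. Rearranged, this is $h_1^*(u)+1\le h_1^*(v)$, which handles the edge $(u,v)\in E_1^*$. Comparing with $z_3$ instead gives $h_3^*(u)+1\le h_3^*(v)$, which handles $(u,v)\in E_3^*$.
\item If $u\to z_2\to v$ is shortest, comparing with $z_1$ gives $h_1^*(v)+1\le h_1^*(u)$, which handles $(v,u)\in E_1^*$. Comparing with $z_3$ gives $h_2^*(u)+1\le h_2^*(v)$, which handles $(u,v)\in E_2^*$.
\item If $u\to z_3\to v$ is shortest, the comparisons with $z_2$ and with $z_1$ give $h_2^*(v)+1\le h_2^*(u)$ and $h_3^*(v)+1\le h_3^*(u)$, which handle the two reversed edges.
\end{itemize}
Each of these is a one-line rearrangement of the form ``first difference versus second difference,'' as in the intuition of \cref{subsec:intuition}.

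The only real obstacle is the gap-$1$ requirement, which the scaling step above resolves. Everything else is sign bookkeeping, and the thing to check carefully there is that the orientation of each edge placed in $E_i^*$ matches the direction of the resulting inequality.
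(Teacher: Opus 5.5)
Your proof is correct and follows the paper's approach. You rearrange each shortest-versus-alternative path inequality into a comparison of the two differences, and you fix the gap-$1$ requirement by scaling. The only difference is cosmetic: you rescale $w$ by $1/\delta$ before defining the $h_i^*$, while the paper proves the strict inequalities $h_i^*(x)<h_i^*(y)$ and then multiplies all $h_i^*$ by a large constant $C'$, which amounts to the same thing.
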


\begin{proof}
    It is clear by definition that $h_1^* + h_2^* = h_3^*$ since $-w(u,z_2)$ and $w(u,z_2)$ cancels out. We will check that $h_1^*$ is a topological ordering of $G_1^*$.

    For the edges $(u,v) \in E_1^*$ where $u \to z_1 \to v$ is the shortest path in $G$, by the shortest path condition, we have
    \[
    w(u,z_1) + w(z_1,v) < w(u,z_2) + w(z_2,v) \implies w(u,z_1) - w(u,z_2) < w(z_2,v) - w(z_1,v),
    \]
    so $h_1^*(u) < h_1^*(v)$. Similarly, for the edges $(v,u) \in E_1^*$ where $u \to z_2 \to v$ is the shortest path in $G$, we have
    \[
    w(u,z_2) + w(z_2,v) < w(u,z_1) + w(z_1,v) \implies w(z_2,v) - w(z_1,v) < w(u,z_1) - w(u,z_2),
    \]
    so $h_1^*(v) < h_1^*(u)$. Therefore, $h_1^*$ is a valid topological ordering of $G_1^*$. 

    The proof is the same for $h_2^*$ with $z_1,z_2$ replaced by $z_2,z_3$ respectively, and the proof is the same for $h_3^*$ with $z_2$ replaced by $z_3$. 

    Note that for all edges, say $(u,v) \in E_1^*$, we only proved $h_1^*(u) < h_1^*(v)$ instead of $h_1^*(u) + 1 \le h_1^*(v)$ which is required in the definition. This is okay because we can multiply all $h_i^*$ by some large constant $C' > 0$ to boost the difference.
\end{proof}

Let $h_i^*$ be any topological ordering of $G_i^*$ for each $i$, with $h_1^* + h_2^* = h_3^*$. We now show that we can get a shortest-paths preserver $G''$ using the topological orderings. 

\begin{lemma}
    Let $G'' = (V, E, w'')$ where $w''$ is defined as follows:
    \begin{itemize}
        \item 
        For each $u \in V_1$, let
        \[
        w''(u,z_1) := h_3^*(u), \qquad w''(u,z_2) := h_2^*(u), \qquad w''(u,z_3) := 0.
        \]

        \item 
        For each $v \in V_3$, let
        \[
        w''(z_1,v) := -h_3^*(v), \qquad w''(z_2,v) := -h_2^*(v), \qquad w''(z_3,v) := 0.
        \]
    \end{itemize}

    Then, $G''$ is a gap-1 shortest-paths preserver of $G$.
\end{lemma}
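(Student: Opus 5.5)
The plan is a direct verification. In $G$ (assumed complete, as in the construction) the only vertex pairs joined by more than one path are pairs $(u,v)\in V_1\times V_3$. Every other reachable pair, namely $(u,z_i)$ or $(z_i,v)$, is joined by a single edge, so it has a unique path and both conditions are vacuous there. For $u\in V_1$ and $v\in V_3$ there are exactly three paths $u\to z_i\to v$, and under $w''$ their lengths are
\[
L_1 := h_3^*(u)-h_3^*(v),\qquad L_2 := h_2^*(u)-h_2^*(v),\qquad L_3 := 0 .
\]
I will show that the path which is shortest in $G$ beats each of the other two by at least $1$. This gives both requirements at once: that path is the unique shortest path in $G''$, so $G''$ is a shortest-paths preserver, and the gap condition holds.

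I split into three cases according to which $z_i$ lies on the shortest $u$--$v$ path in $G$. In each case I use the edges that the construction placed in the $E_j^*$, together with the topological-ordering inequalities $h_j^*(x)+1\le h_j^*(y)$ for $(x,y)\in E_j^*$. I also use $h_1^*+h_2^*=h_3^*$ to write $L_2-L_1 = h_1^*(v)-h_1^*(u)$.

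\emph{Case $z_1$.} We have $(u,v)\in E_1^*\cap E_3^*$. Then $L_1 = h_3^*(u)-h_3^*(v)\le -1 = L_3-1$. Also $L_2-L_1 = h_1^*(v)-h_1^*(u)\ge 1$.

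\emph{Case $z_2$.} We have $(v,u)\in E_1^*$ and $(u,v)\in E_2^*$. Then $L_1-L_2 = h_1^*(u)-h_1^*(v)\ge 1$. Also $L_2 = h_2^*(u)-h_2^*(v)\le -1 = L_3-1$.

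\emph{Case $z_3$.} We have $(v,u)\in E_2^*\cap E_3^*$. Then $L_2 = h_2^*(u)-h_2^*(v)\ge 1 = L_3+1$, and likewise $L_1\ge 1 = L_3+1$.

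So in every case the $G$-shortest path has length at least $1$ less than each alternative in $G''$.

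The weights $w''$ may be zero or negative. As in the construction of \cref{thm:reduction}, I add a constant $C$ to every edge. Every $u$--$v$ path has exactly two edges, so each path length increases by exactly $2C$. This preserves all the differences above, and hence both the identity of the shortest paths and the gap of $1$. I will check that $C$ can be taken as $1$ minus the minimum of $w''$. The resulting maximum weight is then at most the spread of $w''$ plus $1$, which is bounded by $\mathrm{range}(h_1^*,h_2^*,h_3^*)+1$. This bound is what \cref{thm:other_direction} needs.

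I expect no real obstacle in the three cases. The only step needing care is the final weight bound. The spread of $w''$ mixes the values $h_3^*(u)$ and $-h_3^*(v)$, as well as $h_2^*$ against $0$. I would try first to normalize the $h_i^*$ by shifting them so that the stated bound actually holds, and I would verify that shifting is allowed given the constraint $h_1^*+h_2^*=h_3^*$.
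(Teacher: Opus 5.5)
Your proposal is correct and follows essentially the same route as the paper: the same three cases, using the edges the construction places in the $E_j^*$, the topological-order inequalities, and $h_1^*+h_2^*=h_3^*$ to show that the $G$-shortest path is at least $1$ shorter than each alternative under $w''$. Your closing remarks on positivity and the weight bound go beyond this lemma and correspond to the next one (\cref{lem:G'}), where the paper shifts the two layers by different constants; as you suspected, a uniform shift of the unnormalized $w''$ need not give maximum weight $T+1$, where $T=\mathrm{range}(h_1^*,h_2^*,h_3^*)$, but it does once you center $h_2^*$ and $h_3^*$ by constant shifts (shifting $h_1^*$ by the difference of those constants keeps additivity), since then every value of $w''$ lies in $[-T/2,T/2]$.
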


\begin{proof}
    Fix some pair $u \in V_1, v \in V_3$. The idea is that by the construction of $G_i^*$ and $w''$, choosing valid topological orders translates directly to choosing valid shortest-paths reweighting of $G$. 
    
    We do casework on the shortest path from $u$ to $v$ in $G$, and show that in each case the shortest paths are preserved:
    \begin{itemize}
        \item 
        Suppose $u \to z_1 \to v$ is the shortest path in $G$. Then we have $(u,v) \in E_1^*$ and $(u,v) \in E_3^*$, so $h_1^*(v)\ge 1+ h_1^*(u)$ and $h_3^*(v) \ge1+ h_3^*(u)$. We check that
        \[
        w''(u \to z_1 \to v) - w''(u \to z_2 \to v) = (h_3^* - h_2^*)(u) + (h_2^* - h_3^*)(v) = h_1^*(u) - h_1^*(v) \le-1
        \]
        and
        \[
        w''(u \to z_1 \to v) - w''(u \to z_3 \to v) = h_3^*(u) - h_3^*(v) \le-1,
        \]
        so $u \to z_1 \to v$ is still shortest in $G''$ with gap 1 to all not-shortest paths.

        \item 
        Suppose $u \to z_2 \to v$ is the shortest path in $G$. Then we have $(v,u) \in E_1^*$ and $(u,v) \in E_2^*$, so $h_1^*(u) \ge1+ h_1^*(v)$ and $h_2^*(v) \ge1+ h_2^*(u)$. We check that
        \[
        w''(u \to z_2 \to v) - w''(u \to z_1 \to v) = (h_2^* - h_3^*)(u) + (h_3^* - h_2^*)(v) = h_1^*(v) - h_1^*(u) \le-1
        \]
        and
        \[
        w''(u \to z_2 \to v) - w''(u \to z_3 \to v) = h_2^*(u) - h_2^*(v) \le-1,
        \]
        so $u \to z_2 \to v$ is still shortest in $G''$ with gap 1 to all not-shortest paths.

        \item 
        Suppose $u \to z_3 \to v$ is the shortest path in $G$. Then we have $(v,u) \in E_2^*$ and $(v,u) \in E_3^*$, so $h_2^*(u) \ge1+ h_2^*(v)$ and $h_3^*(u) \ge1+h_3^*(v)$. We check that
        \[
        w''(u \to z_3 \to v) - w''(u \to z_1 \to v) = h_3^*(v) - h_3^*(u) \le-1
        \]
        and
        \[
        w''(u \to z_3 \to v) - w''(u \to z_2 \to v) = h_2^*(v) - h_2^*(u) \le-1
        \]
        so $u \to z_3 \to v$ is still shortest in $G''$.
    \end{itemize}
    So $G''$ is a gap-1 shortest-paths preserver of $G$.
\end{proof}

Now, we show that we can modify the above gap-1 shortest-paths preserving reweighting to also have all edge weights positive and bounded by $\mathrm{range}(h_1^*, h_2^*, h_3^*)$.

\begin{lemma}
\label{lem:G'}
    Let $G' = (V, E, w')$ and 
    $T := \mathrm{range}(h_1^*, h_2^*, h_3^*)$.
    We define $w'$ as follows:
    \begin{itemize}
        \item 
        For each $u \in V_1$, let
        \[
        w'(u,z_1) := 1 + h_3^*(u) - \min_{u' \in V^*} h_3^*(u'), \quad w'(u,z_2) := 1 + h_2^*(u) - \min_{u' \in V^*} h_2^*(u'),
        \quad w'(u,z_3) := 1.
        \]

        \item
        For each $v \in V_3$, let
        \[
        w'(z_1,v) := 1 + T - h_3^*(v) + \min_{u' \in V^*} h_3^*(u'), \quad w'(z_2,v) := 1 + T - h_2^*(v) + \min_{u' \in V^*} h_2^*(u'),\quad\]
        \[
        w'(z_3,v) := 1 + T.
        \]
    \end{itemize}
    Then, $G'$ is a positively-weighted gap-1 shortest-paths preserver of $G$, and 
    \[
    1 \le w'(e) \le 1+T
    \]
    for all $e \in E$.
\end{lemma}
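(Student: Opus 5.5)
The plan is to view $w'$ as $w''$ from the previous lemma after three harmless modifications, and then read off the bounds on each weight directly from the definition of range.

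First I show that $w'$ preserves the gap-1 shortest-path structure of $w''$. Every path in the 3-layered graph $G$ has the form $u \to z_i \to v$ with $u \in V_1$ and $v \in V_3$. Set $m_3 := \min_{u'} h_3^*(u')$ and $m_2 := \min_{u'} h_2^*(u')$. Then each weight $w'$ is obtained from $w''$ by:
\begin{itemize}
\item adding the constant $1$ to every edge leaving $V_1$, and the constant $1+T$ to every edge entering $V_3$;
\item adding $-m_3$ to $(u,z_1)$ and $+m_3$ to $(z_1,v)$;
\item adding $-m_2$ to $(u,z_2)$ and $+m_2$ to $(z_2,v)$.
\end{itemize}
For a fixed pair $(u,v)$, each of the three paths $u\to z_i\to v$ therefore changes by exactly the same amount, namely $2+T$: the $\pm m_3$ terms cancel along $z_1$, and the $\pm m_2$ terms cancel along $z_2$. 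So all pairwise differences between the three paths are unchanged. By the previous lemma $G''$ is a gap-1 shortest-paths preserver of $G$, hence so is $G'$.

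Next I bound the weights. Since the $V_1$ side uses $h_3^*(u)-m_3 \ge 0$ and $h_2^*(u)-m_2\ge 0$, every edge leaving $V_1$ has weight at least $1$. These differences are also at most $\max h_i^* - \min h_i^* \le T$ by the definition of range, so those weights are at most $1+T$.

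On the $V_3$ side, $w'(z_1,v) = 1+T-(h_3^*(v)-m_3)$. Since $0 \le h_3^*(v)-m_3 \le T$, this weight lies in $[1,1+T]$, and the same argument handles $z_2$. The edges $(u,z_3)$ and $(z_3,v)$ have weights $1$ and $1+T$ directly.

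The one subtle point, and the likely obstacle, is that the minima and range are taken over all of $V^*=V_1\cup V_3$. This is exactly what makes the same $m_i$ and $T$ valid for vertices on both sides. Positivity then follows from the lower bound of $1$, which completes the lemma.
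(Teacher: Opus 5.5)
Your proposal is correct and follows essentially the same route as the paper. The paper also shows that, for each fixed pair $(u,v)$, the differences $w'(u\to z_i\to v)-w'(u\to z_3\to v)$ coincide with those under $w''$; your observation that every path shifts by exactly $2+T$ is the same fact. It then bounds each weight using $0\le h_i^*(x)-\min_{u'\in V^*}h_i^*(u')\le T$, exactly as you do.
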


\begin{proof}
    We first prove that $w'$ is a gap-1 shortest-paths preserver of $G''$, which would imply it is a gap-1 shortest-paths preserver of $G$. In fact, we will prove something stronger: for each each pair of endpoints $u \in V_1, v \in V_3$, $w'$ preserves the relative difference between all paths in $G''$.

    Fix some pair of endpoints $u \in V_1, v \in V_3$. Check that
    \begin{align*}
    w'(u \to z_1 \to v) - w'(u \to z_3 \to v) &= (2+T + h_3^*(u) - h_3^*(v)) - (2+T) \\
    &= h_3^*(u) - h_3^*(v) \\
    &= w''(u \to z_1 \to v) - w''(u \to z_3 \to v) 
    \end{align*}
    and similarly
    \begin{align*}
    w'(u \to z_2 \to v) - w'(u \to z_3 \to v) &= (2+T + h_2^*(u) - h_2^*(v)) - (2+T) \\
    &= h_2^*(u) - h_2^*(v) \\
    &= w''(u \to z_2 \to v) - w''(u \to z_3 \to v).
    \end{align*}
    This proves that all relative differences of paths between $u,v$ are preserved, so the shortest paths are preserved and the gap to not-shortest paths remains at least 1.
    
    Finally, note that for all $u \in V^*$ and $i \in \{2,3\}$, we have
    \[
    0 \le h_i^*(u) - \min_{u' \in V^*} h_i^*(u') \le T,
    \]
    so the bound on the edge weights clearly hold.
\end{proof}

Recall that the original inequality we are trying to prove is
\begin{equation}
\label{ineq:other_direction}
\min_{G' = (V, E, w')} \left(\max_{e \in E} w'(e)\right) \le 1 + \min_{(h_1^*, h_2^*, h_3^*)} \left( \mathrm{range}(h_1^*, h_2^*, h_3^*)\right)
\end{equation}
where the first minimum is over gap-1 shortest-paths preservers $G' = (V, E, w')$ of $G$ and the second minimum is over 3-topological order preservers $(h_1^*, h_2^*, h_3^*)$ of $(G_1^*, G_2^*,G_3^*)$.

In \cref{lem:G'} we showed that for any 3-topological orderings $(h_1^*,h_2^*,h_3^*)$, there exists a gap-1 shortest-paths preserver $G' = (V,E,w')$ with
\[
\max_{e \in E} w'(e) \le 1 + \mathrm{range}(h_1^*, h_2^*,h_3^*).
\]
In particular, this inequality holds for the 3-topological ordering $(h_1^*,h_2^*,h_3^*)$ with minimum possible range. Since the left hand side of \cref{ineq:other_direction} takes a minimum over all gap-1 shortest-paths preservers and $G'$ is a gap-1 shortest-paths preserver by \cref{lem:G'}, we have proved \cref{ineq:other_direction}.

This completes the proof of \cref{thm:other_direction}.

\end{document}